\documentclass[aps,pra,reprint,superscriptaddress]{revtex4-2}

\usepackage{amsmath,amssymb,amsthm,hyperref}
\usepackage{graphicx}
\usepackage{braket}
\usepackage{tikz}
\usetikzlibrary{calc,fit,positioning,arrows.meta}

\newtheorem{theorem}{Theorem}
\newtheorem{lemma}{Lemma}
\newtheorem{corollary}{Corollary}
\newtheorem{remark}{Remark}
\newtheorem{proposition}{Proposition}

\newcommand{\Tr}{\mathrm{Tr}}
\newcommand{\bbE}{\mathbb{E}}
\newcommand{\bbP}{\mathbb{P}}
\newcommand{\kB}{k_{\mathrm{B}}}
\newcommand{\Id}{\mathbb{I}}

\begin{document}
\title{Battery-Explicit Thermodynamic Witnesses of Bell Post-Quantumness}
\author{Piotr {\'C}wikli{\'n}ski}
\email{piotr.cwiklinski@ug.edu.pl}
\affiliation{International Centre for Theory of Quantum Technologies, University of Gda{\'n}sk, ul. prof. Marii Janion 7, 80-309 Gda\'nsk, Poland}
\date{\today}

\begin{abstract}
We introduce a battery-explicit thermodynamic witness of post-quantum Bell correlations. In each round, a single supplied excitation is routed into an explicit two-level battery if and only if a Bell-game condition is satisfied. The routing operation is implemented by an energy-preserving controlled SWAP, with all logical control registers taken to be degenerate. Thus the correlation resource does not create energy; it only determines the probability that the supplied excitation reaches the battery.

The construction is first formulated for finite two-player XOR games. For any such game, the mean battery charge is exactly the game success probability multiplied by the battery gap. Optimizing over local, quantum, or nonsignalling behaviours therefore turns the corresponding game values into local, quantum, or nonsignalling battery-charge ceilings. For the CHSH game, Tsirelson's bound becomes a strict quantum ceiling on the mean battery charge, while a PR-box behaviour reaches the single-excitation cap.

The witness is trusted-module rather than device-independent: it assumes calibrated Hamiltonians, correct classical wiring, and a trusted energy-preserving battery module. We also discuss a reversible-controller implementation, finite-statistics certification from work data, robustness to imperfect battery readout, and cyclic bookkeeping showing that no positive net work is obtained once fuel restoration and memory erasure are included.
\end{abstract}

\maketitle

\section{Introduction}
\label{sec:introduction}

Bell inequalities show that different physical theories allow different
performances in distributed information-processing tasks.  The CHSH inequality
is the standard example: local behaviours obey the classical bound, quantum
behaviours are limited by Tsirelson's bound, and nonsignalling post-quantum
behaviours such as PR boxes can reach the algebraic value without allowing
communication \cite{CHSH1969,Tsirelson1980,PopescuRohrlich1994,BarrettPRA2005}.
More generally, two-player XOR games provide a compact language for Bell-type
tasks with binary outputs \cite{CleveHoyerTonerWatrous2004}.  They include CHSH
and the chained Bell games as important examples
\cite{BraunsteinCaves1990,Wehner2006}.

A separate line of work in quantum thermodynamics emphasizes that work,
information, and control must be accounted for explicitly.  In resource-theoretic
formulations, thermodynamic transformations are described by specifying allowed
operations, nonequilibrium resources,
\cite{HorodeckiNatComm2013,BrandaoPNAS2015,LostaglioNatComm2015,PerryPRX2018, Renes2014, Lostaglio2019RoPP, Faist2015, Mazurek_2018, LostaglioArxiv1410, ETO2018, CaravelliQuantum2021, Su2025_thermoresource, HackMendl2025, KorzekwaWorkFromCoh2016, HuDing2019, SonNg2023, Shiraishi2025, Cwiklinski2015,Gour2015, LostaglioReview2019, ChitambarGour2019}, batteries, and  memories \cite{Rodriguez_2024_Optimal_Control, Malavazi2025ChargepreservingOI, Borhan2025_batteries, LipkaBartosik2021secondlawof, CaravelliQuantum2021, Zhi2-25_quantumbatteries, chaki2025positivenonpositivemeasurementsenergy, Chaki2026_assistantsquantumbatteries, Chaki2025_distillationfromquantumbatteries}.
Likewise, Maxwell-demon and feedback-control settings show that information can
modify operational work balances, provided that memories and erasure costs are
included \cite{Landauer1961,Bennett1982,SagawaUeda2008,SagawaUeda2010,Parrondo2015,ReebWolf2014,Goold2016}.
These considerations are especially important whenever a controller or a
battery is used: if it is not modelled explicitly, an apparent thermodynamic
advantage may come from an unaccounted work source.

Thermodynamic observables have also been proposed as witnesses of quantum
properties. In particular, heat exchange with a thermal environment, assisted
by a quantum memory, can witness resources such as entanglement and coherence
\cite{deOliveiraJuniorBraskLipkaBartosik2025}.  The present work follows a
different route.  We do not use heat generation or thermalization as the
witnessing signal.  Instead, a pre-existing excitation is routed into an
explicit battery by an energy-preserving operation, and the mean battery charge
is compared with Bell-game resource-class bounds.

In this paper we connect these two viewpoints in a deliberately conservative
way.  We construct a trusted battery witness of post-quantum Bell correlations.
The witness does not extract work from Bell correlations and does not derive
Tsirelson's bound from thermodynamics.  Instead, it implements a calibrated
energy-preserving transducer: a single supplied excitation is routed into an
explicit two-level battery if and only if a Bell-game winning condition is
satisfied.

The construction is formulated first for finite two-player XOR games.  Let
\[
\mathcal{G}=(\mathcal{U},\mathcal{V},\pi,f)
\]
be such a game, and let \(P(a,b|u,v)\) be the behaviour generated by the
correlation resource.  The referee samples \((U,V)\sim\pi\) and an independent
uniform bit \(R\), and defines
\[
X=f(U,V)\oplus R.
\]
The players' answers define
\[
G=A\oplus B\oplus R.
\]
Then
\[
G=X
\quad\Longleftrightarrow\quad
A\oplus B=f(U,V).
\]
Thus equality of the two logical bits is exactly the XOR-game winning event.
The random pad \(R\) makes the individual bits unbiased; the game performance is
stored in their correlation.

The thermodynamic module consists of two degenerate logical registers storing
\(X\) and \(G\), a fuel qubit \(F\), and a battery qubit \(W\).  The fuel and
battery have the same energy gap \(\Delta\).  At the beginning of a round the
fuel is excited and the battery is empty.  If the equality condition \(X=G\) is
satisfied, an equality-controlled SWAP transfers the excitation from \(F\) to
\(W\); otherwise the fuel-battery state is left unchanged.  Since the SWAP only
exchanges equal-energy states, and the logical registers are degenerate, the
operation is energy preserving.

We denote by \(W_{\rm bat}\) the increase of the battery energy in one round.
In the ideal protocol this random variable takes only the two values
\[
W_{\rm bat}=0
\quad\text{or}\quad
W_{\rm bat}=\Delta .
\]
It is equal to \(\Delta\) precisely on those rounds in which the game is won.

Taking the average over many rounds gives the basic work--game identity
\begin{equation}
\bbE[W_{\rm bat}]
=
\Delta\,p_{\rm succ}^{\mathcal{G}}(P),
\label{eq:intro_work_success}
\end{equation}
where \(p_{\rm succ}^{\mathcal{G}}(P)\) is the success probability of the
behaviour \(P\) in the game \(\mathcal G\).  Thus the mean battery charge is not
an additional thermodynamic quantity independent of the Bell experiment.  It is
the game success probability written in energy units.

Now fix a class \(\mathsf C\) of admissible behaviours, for instance the local,
quantum, or nonsignalling class.  Let
\[
\omega_{\mathsf C}(\mathcal G)
\]
denote the optimal success probability of the game within that class.  Equation
\eqref{eq:intro_work_success} immediately turns this game value into a battery
ceiling,
\begin{equation}
W_{\mathsf C}^{\max}(\mathcal G)
=
\Delta\,\omega_{\mathsf C}(\mathcal G).
\label{eq:intro_class_ceiling}
\end{equation}
An observed mean battery charge above the quantum ceiling of the game is then a
trusted-module witness that the effective behaviour is not quantum-realizable.

For the CHSH game, let \(S(P)\) be the standard CHSH expression of the
behaviour \(P\).  Since the CHSH winning probability is
\[
p_{\rm win}(P)=\frac12+\frac{S(P)}8,
\]
Eq.~\eqref{eq:intro_work_success} gives
\begin{equation}
\frac{\bbE[W_{\rm bat}]}{\Delta}
=
\frac12+\frac{S(P)}8.
\label{eq:intro_chsh_work}
\end{equation}

The algebraic part of the construction is simple by design: a Boolean winning
predicate is converted into a binary control signal.  The point of the present
work is not that this Boolean computation is difficult, but that it can be
realized as an explicit energy-preserving battery operation with no hidden work
source in the controller.  The contribution is the calibrated thermodynamic
transduction of Bell-game performance into a battery degree of freedom, together
with the associated resource-class thresholds, finite-statistics witness, and
cyclic bookkeeping showing that no positive net work is produced.

The result should be read with its assumptions visible.  The witness is not a
loophole-free Bell test and is not device independent.  It assumes calibrated
Hamiltonians, correct classical wiring, a trusted equality-controlled SWAP, and
a calibrated battery readout.  It also does not produce positive cyclic work.
The supplied excitation is the energetic resource.  The correlation resource
only determines whether that excitation reaches the battery.  If the device is
used cyclically, consumed fuel excitations must be restored, and any persistent
memory record must be reset.

Recent work has also studied work-extraction tasks in which quantum structure
itself gives an advantage over classical commuting implementations.  For
example, Ref.~\cite{RoutRavichandranHorodeckiChaturvedi2026} shows that
incompatible Hamiltonian settings can exceed classical commuting limits in an
average-work task.  Our setting is different: the Hamiltonians of the fuel and
battery are fixed and commuting, and the nonclassicality appears through the
correlation resource controlling an energy-preserving routing operation.

The paper is organized as follows.  Section~\ref{sec:physical_setting} states
the physical setting and trusted-module assumptions.
Section~\ref{sec:battery_passivity_charging} discusses stored battery energy,
passivity, ergotropy, coherent excitation transfer, and dissipative corrections.
Section~\ref{sec:xor_transducer} introduces the XOR-game battery transducer and
proves the main identity.  Section~\ref{sec:examples} applies the construction
to CHSH and chained Bell games.  Section~\ref{sec:controller_bookkeeping}
discusses reversible controller implementations and cyclic thermodynamic
bookkeeping.  Section~\ref{sec:statistics_robustness} treats finite-statistics
certification and imperfect battery readout.  Section~\ref{sec:discussion}
summarizes the scope and limitations.

\section{Physical setting and trusted-module assumptions}
\label{sec:physical_setting}

We now specify the physical model used throughout the paper.  The aim is to
isolate the thermodynamic part of the construction from the source of
correlations.  The correlation resource may be local, quantum, nonsignalling, or
post-quantum.  The energy module is trusted and has explicitly specified
Hamiltonians. The overall architecture of the trusted battery-witness protocol is summarized in Fig.~\ref{fig:protocol_scheme}.

\subsection{Correlation resource and game data}

The correlation resource is used only during the distributed game stage.  The
referee samples questions \((u,v)\), sends \(u\) to Alice and \(v\) to Bob, and
receives output bits \(a,b\).  This defines a behaviour
\[
P(a,b|u,v).
\]
No thermodynamic assumption is made about the physical device that produces this
behaviour.  In particular, the battery module does not require a quantum model
of the correlation resource.

After the game outputs are produced, the relevant classical data are available
to the trusted local controller that operates the battery module.  This is a
trusted-module setting: the thermodynamic stage is not required to be spacelike
separated and is not itself a Bell test.

\subsection{Logical registers}

The logical registers used to store the target bit, the guess bit, and any
controller memory are taken to be degenerate.  Thus their Hamiltonian is
idealized as
\[
H_{\rm logical}=0.
\]
This assumption is used only for bookkeeping: it ensures that logical operations
on these registers do not themselves raise or lower the explicit energy of the
fuel-battery system.

This is an idealization.  Physically, it means that the logical energy splittings
are negligible compared with the battery gap \(\Delta\), or that they are
compensated by a calibrated control system that is not being counted as a hidden
work source.

\subsection{Fuel and battery}

The energetic resource in each round is one excitation in a fuel qubit \(F\).
The fuel Hamiltonian is
\[
H_F=\Delta\ket{1}\!\bra{1}_F.
\]
The battery qubit \(W\) has the same gap,
\[
H_W=\Delta\ket{1}\!\bra{1}_W.
\]
At the beginning of a round,
\[
F=\ket{1}_F,
\qquad
W=\ket{0}_W.
\]
Thus the initial fuel-battery state contains one transferable excitation.

The one-round battery-energy increase introduced in the Introduction is defined
more explicitly as
\[
W_{\rm bat}
:=
\Tr(H_W\rho'_W)-\Tr(H_W\rho_W),
\]
where \(\rho_W\) and \(\rho'_W\) are the battery states before and after the
routing operation.  In the ideal protocol,
\[
W_{\rm bat}\in\{0,\Delta\}.
\]

\subsection{Energy-preserving routing}

The only nontrivial energetic operation is a controlled SWAP between \(F\) and
\(W\).  When the control condition is satisfied, the operation maps
\[
\ket{1}_F\ket{0}_W
\longmapsto
\ket{0}_F\ket{1}_W.
\]
When the condition is not satisfied, it acts as the identity.  Since
\[
\ket{1}_F\ket{0}_W
\quad\text{and}\quad
\ket{0}_F\ket{1}_W
\]
have the same total energy \(\Delta\), the SWAP is energy preserving.  The
transducer therefore routes an already supplied excitation; it does not create
one.

The routing operation is compatible with the Thermal Operations paradigm in the
trivial-bath sense: the core step is a global energy-preserving unitary.  A
thermal bath is needed only if one includes irreversible memory reset in a
cyclic implementation.

\subsection{Trusted-module interpretation}

The witness assumes that the following ingredients are calibrated and trusted:
the Hamiltonians \(H_F\) and \(H_W\), the battery gap \(\Delta\), the logical
wiring that computes the control condition, the controlled SWAP, and the
battery readout.  Under these assumptions, a battery charge above the quantum
ceiling of a game implies that the effective behaviour is not
quantum-realizable.

Equivalently, in an actual experiment, observing such a violation means that at
least one of two things is true: either the effective behaviour is outside the
quantum set, or one of the trusted-module assumptions has failed.  The witness
therefore certifies post-quantumness only relative to independent validation of
the energy module and readout.

\subsection{Thermodynamic boundary of the cycle}

The single-round transducer is not a heat engine.  It does not convert heat into
work, and it does not convert Bell correlations into net work.  It only transfers
a supplied excitation to the battery on winning rounds.

If the device is used cyclically, then successful rounds leave the fuel in
\(\ket{0}_F\), and the fuel excitation must be restored before reuse.  This
restoration costs at least \(\Delta\) on those rounds.  Failed rounds leave the
fuel excitation unused and do not require re-excitation of the same fuel qubit.

If the controller stores a persistent success/failure record, that memory must
also be reset before the next cycle.  For blind erasure, Landauer's principle
gives the corresponding erasure cost.  If the full game transcript is stored
inside the thermodynamic cycle, then its reset cost must also be included.  In
the main cyclic bookkeeping below, the transcript registers are treated as
external game data, while the local fuel, battery, and possible success/failure
memory are included explicitly.

\begin{figure*}[t]
\centering
\begin{tikzpicture}[
    box/.style={
        draw,
        rounded corners,
        align=center,
        text width=3.45cm,
        minimum height=1.25cm
    },
    arr/.style={-{Latex[length=2mm]}, thick}
]

\node[box] (ref) at (0,0) {Trusted referee/controller\\
samples \(U,V,R\)\\
\(X=f(U,V)\oplus R\)};

\node[box] (game) at (5.9,0) {XOR-game resource\\
inputs \(u,v\)\\
outputs \(a,b\)};

\node[box] (logic) at (5.9,-2.8) {Classical wiring\\
\(G=A\oplus B\oplus R\)\\
test \(G=X\)};

\node[box] (energy) at (0,-2.8) {Energy module\\
fuel \(F=\ket1\)\\
battery \(W=\ket0\)};

\node[box] (out) at (5.9,-5.6) {Battery output\\
charged iff \(G=X\)\\
equivalently iff game is won};

\coordinate (energyctrl) at ([yshift=4pt]energy.east);
\coordinate (energyswap) at ([xshift=-10pt]energy.south);

\draw[arr] (ref.east) -- node[above] {\(u,v\)} (game.west);
\draw[arr] (game.south) -- node[right] {\(a,b\)} (logic.north);

\coordinate (XRmid) at ($(ref.south)+(0,-0.75)$); \draw[arr] (ref.south) -- node[left,pos=0.45] {\(X,R\)} (XRmid) -| (logic.north);

\draw[arr] (logic.west) -- node[above] {control} (energyctrl);

\draw[arr] (energyswap) |- node[pos=0.22,right,align=center] {controlled\\SWAP} (out.west);

\node[
    draw,
    dashed,
    rounded corners,
    fit=(ref)(logic)(energy)(out)(game),
    inner sep=0.4cm,
    label={[align=center]above:battery-witness protocol}
] {};

\end{tikzpicture}
\caption{
Schematic of the trusted battery witness.  The correlation resource supplies
only the classical outputs \(a,b\) for the XOR game.  The trusted
referee/controller forms the target bit \(X=f(U,V)\oplus R\) and the guess bit
\(G=A\oplus B\oplus R\).  The fuel excitation is routed into the battery by an
energy-preserving controlled SWAP if and only if \(G=X\), which is equivalent to
winning the game.  The energetic resource is the initially supplied fuel
excitation, not the correlation resource.
}
\label{fig:protocol_scheme}
\end{figure*}

\section{Battery energy, passivity, and coherent charging}
\label{sec:battery_passivity_charging}
\subsection{Stored energy, passivity, and ergotropy}
\label{subsec:stored_energy_ergotropy}

The battery variable used in the witness is the increase of the energy of an
explicit two-level work-storage system.  Since in the quantum-battery literature
one often distinguishes stored energy from ergotropy, we now spell out this
distinction for the present model.

The battery Hamiltonian is
\begin{equation}
H_W=\Delta\ket{1}\!\bra{1}_W.
\label{eq:battery_hamiltonian_ergotropy_section}
\end{equation}
In one round of the ideal protocol, the battery is either left in \(\ket0_W\) or
charged to \(\ket1_W\).  Thus the run-resolved battery-energy increase is
\begin{equation}
W_{\rm bat}\in\{0,\Delta\}.
\label{eq:run_resolved_battery_energy}
\end{equation}
This is the quantity entering the witness.

If the charge record is kept, then the interpretation is simple.  Conditional on
a charged round, the battery state is \(\ket1_W\).  This state has ergotropy
\(\Delta\), since the unitary flip \(\ket1_W\mapsto\ket0_W\) lowers the battery
energy by \(\Delta\).  Conditional on an uncharged round, the battery state is
\(\ket0_W\), whose ergotropy is zero.  Therefore the average event-resolved
extractable energy is
\begin{equation}
\Delta\,\bbP[\mathrm{charged}]
=
\Delta\,p_{\rm succ}^{\mathcal G}(P).
\label{eq:event_resolved_extractable_energy}
\end{equation}

If, instead, the charge record is ignored and only the average battery state is
considered, then after many rounds the battery state is
\begin{equation}
\rho_W(p)
=
(1-p)\ket0\!\bra0
+
p\ket1\!\bra1,
\qquad
p=p_{\rm succ}^{\mathcal G}(P).
\label{eq:average_battery_state}
\end{equation}
Its mean stored energy is
\begin{equation}
E_W(p)=\Tr[H_W\rho_W(p)]=\Delta p.
\label{eq:mean_stored_energy}
\end{equation}
However, its unconditional ergotropy is not generally equal to \(\Delta p\).
For a two-level Hamiltonian, the passive state is obtained by putting the larger
eigenvalue of the density matrix on the ground state.  Hence
\begin{equation}
\mathcal W(\rho_W(p))
=
\Delta\max\{0,2p-1\}.
\label{eq:two_level_ergotropy}
\end{equation}
If \(p\le1/2\), the averaged battery state is passive.  If \(p>1/2\), it is
population-inverted and a unitary swap of the two levels extracts
\(\Delta(2p-1)\).

Thus the witness should be read as a statement about stored battery energy, or
about event-resolved extractable energy when the charge record is retained.  It
is not a statement that the ergotropy of the averaged battery state is always
\(\Delta p\).  This is the standard distinction between passivity and maximal
unitary work extraction \cite{PuszWoronowicz1978,AllahverdyanBalianNieuwenhuizen2004}.

The same terminology also clarifies the role of temperature.  For a reservoir at
inverse temperature \(\beta\), the thermal state of the isolated battery is
\begin{equation}
\gamma_W
=
\frac{\ket0\!\bra0+e^{-\beta\Delta}\ket1\!\bra1}
{1+e^{-\beta\Delta}}.
\label{eq:battery_gibbs_state_passivity}
\end{equation}
For positive temperature this state is passive.  The ideal battery-routing step
does not thermalize the battery; it is a closed energy-preserving excitation
transfer.  Temperature enters only when one models relaxation, thermal
preparation, or erasure/reset operations.

\subsection{Coherent charger--battery coupling}

The excitation transfer used in the battery module can be realized by a coherent
resonant coupling between the fuel and the battery.  On the \(F\otimes W\)
system consider
\begin{equation}
H_{\rm ex}
=
g\left(
\ket{01}\!\bra{10}_{FW}
+
\ket{10}\!\bra{01}_{FW}
\right),
\label{eq:exchange_hamiltonian}
\end{equation}
where \(g>0\).  This Hamiltonian acts only in the one-excitation subspace
\[
{\rm span}\{\ket{10}_{FW},\ket{01}_{FW}\}.
\]
Since both basis states in this subspace have total energy \(\Delta\), one has
\begin{equation}
[H_{\rm ex},H_F+H_W]=0.
\label{eq:exchange_energy_conservation}
\end{equation}
Thus the coupling is energy preserving.

Starting from an excited fuel and an empty battery,
\begin{equation}
e^{-itH_{\rm ex}}\ket{10}_{FW}
=
\cos(gt)\ket{10}_{FW}
-
i\sin(gt)\ket{01}_{FW}.
\label{eq:rabi_transfer}
\end{equation}
At the interaction time
\begin{equation}
t_*=\frac{\pi}{2g},
\label{eq:swap_time}
\end{equation}
this gives
\begin{equation}
e^{-it_*H_{\rm ex}}\ket{10}_{FW}
=
-i\ket{01}_{FW}.
\label{eq:swap_up_to_phase}
\end{equation}
Hence the excitation is transferred from the fuel to the battery, up to a phase.
For the classical work statistics considered here this phase has no effect.  If
one wants a coherent implementation on superpositions of different control
histories, an additional energy-preserving phase correction may be included.

The equality-controlled version is obtained by switching on the exchange only
on the branch \(X=G\):
\begin{equation}
H_{\rm ctrl}
=
\sum_{x,g:\,x=g}
\ket{x}\!\bra{x}_X
\otimes
\ket{g}\!\bra{g}_G
\otimes
H_{\rm ex}.
\label{eq:controlled_exchange_hamiltonian}
\end{equation}
The logical registers are degenerate, and \(H_{\rm ex}\) commutes with
\(H_F+H_W\).  Therefore
\begin{equation}
[H_{\rm ctrl},H_X+H_G+H_F+H_W]=0.
\label{eq:controlled_exchange_energy_conservation}
\end{equation}
Evolution under \(H_{\rm ctrl}\) for time \(t_*\) implements the desired
equality-controlled excitation transfer, again up to a phase irrelevant for the
classical battery population.

For completeness, the ideal controlled-SWAP unitary \(U_{\rm bat}\) can also be
written as a finite-time Hamiltonian evolution.  Let \(S_{FW}\) denote the SWAP
on \(F\otimes W\).  Since \(S_{FW}\) is Hermitian and unitary, define
\begin{equation}
K_{\rm swap}
=
\frac{\pi}{2\tau}\left(\Id_{FW}-S_{FW}\right),
\label{eq:Kswap}
\end{equation}
where \(\tau>0\) is a chosen gate time.  On the \(+1\) eigenspace of \(S_{FW}\),
\(K_{\rm swap}\) has eigenvalue \(0\), while on the \(-1\) eigenspace it has
eigenvalue \(\pi/\tau\).  Thus
\begin{equation}
e^{-iK_{\rm swap}\tau}=S_{FW}.
\label{eq:swap_generator}
\end{equation}
The corresponding controlled generator is
\begin{equation}
H_{\rm bat}
=
\sum_{x,g:\,x=g}
\ket{x}\!\bra{x}_X
\otimes
\ket{g}\!\bra{g}_G
\otimes
K_{\rm swap},
\label{eq:Hbat_generator}
\end{equation}
and satisfies
\begin{equation}
U_{\rm bat}=e^{-iH_{\rm bat}\tau}.
\label{eq:Ubat_exponential}
\end{equation}
This formal expression answers how the ideal unitary may be generated.  The
resonant exchange Hamiltonian in Eq.~\eqref{eq:exchange_hamiltonian} is the more
direct charger--battery implementation in the one-excitation sector used by the
protocol.

The same expression also shows how timing errors enter.  If the exchange pulse
lasts
\[
t=t_*+\delta t,
\]
then the transfer probability is
\begin{equation}
p_{\rm tr}(t)
=
\sin^2(gt)
=
\cos^2(g\delta t).
\label{eq:timing_error_transfer}
\end{equation}
For small \(\delta t\),
\begin{equation}
p_{\rm tr}(t)
=
1-g^2\delta t^2+O(\delta t^4).
\label{eq:timing_error_small}
\end{equation}
Thus an imperfect pulse reduces the true-positive efficiency of the
energy-routing module.  In the readout-imperfection model used later, this
contributes to the calibrated parameter \(\eta_1\).

\subsection{Bath, temperature, and dissipative corrections}
\label{subsec:bath_temperature}

The ideal witness identity is derived for the closed routing step.  No
dissipative environment is needed to obtain
\[
\bbE[W_{\rm bat}]
=
\Delta p_{\rm succ}^{\mathcal G}(P).
\]
The role of a thermal bath is different: it enters when one models relaxation of
the fuel or battery, thermal preparation, or erasure of a persistent memory
record.  In particular, the Landauer term in the cyclic bookkeeping section is
the place where the bath temperature \(T\) enters the ideal thermodynamic
balance.

If the battery is weakly coupled to a reservoir during or after the charging
step, the ideal unitary dynamics should be replaced by an open-system evolution.
A simple phenomenological description is given by the standard GKSL form
\cite{GoriniKossakowskiSudarshan1976,Lindblad1976}
\begin{equation}
\dot\rho
=
-i[H_{\rm ctrl}(t),\rho]
+
\gamma_\downarrow \mathcal D[\sigma_-^W]\rho
+
\gamma_\uparrow \mathcal D[\sigma_+^W]\rho
+
\gamma_\phi \mathcal D[\sigma_z^W]\rho ,
\label{eq:lindblad_battery}
\end{equation}
where
\begin{equation}
\mathcal D[L]\rho
=
L\rho L^\dagger
-\frac12\{L^\dagger L,\rho\}.
\label{eq:lindblad_dissipator}
\end{equation}
Here \(\gamma_\downarrow\) describes relaxation of the battery excitation,
\(\gamma_\uparrow\) describes thermal excitation, and \(\gamma_\phi\) describes
dephasing.  For a thermal reservoir at inverse temperature \(\beta=1/(\kB T)\),
the rates satisfy the detailed-balance relation
\begin{equation}
\frac{\gamma_\uparrow}{\gamma_\downarrow}
=
e^{-\beta\Delta}.
\label{eq:detailed_balance}
\end{equation}
The corresponding stationary state of the isolated battery is the Gibbs state
\begin{equation}
\gamma_W
=
\frac{\ket0\!\bra0+e^{-\beta\Delta}\ket1\!\bra1}
{1+e^{-\beta\Delta}},
\label{eq:battery_gibbs_temperature}
\end{equation}
which is passive for positive temperature.

Such dissipative effects are not part of the ideal witness theorem.  They
change the experimentally observed charging probability by producing losses,
false charges, or relaxation before readout.  In the notation of
Sec.~\ref{sec:statistics_robustness}, they contribute to the calibrated
parameters \(\eta_1\) and \(\eta_0\).  Thus dissipation does not change the
logical form of the witness; it changes the calibration needed to infer the
underlying game success probability from the observed battery signal.

\section{XOR-game battery transducer}
\label{sec:xor_transducer}

We now give the main construction.  The starting point is a finite two-player
XOR game
\[
\mathcal{G}=(\mathcal{U},\mathcal{V},\pi,f),
\]
where \(f:\mathcal{U}\times\mathcal{V}\to\{0,1\}\).  The referee samples
\((U,V)\sim\pi\), Alice and Bob output bits \(A,B\), and the game is won when
\[
A\oplus B=f(U,V).
\]
For a behaviour \(P(a,b|u,v)\), the success probability is
\[
p_{\rm succ}^{\mathcal{G}}(P)
:=
\bbP[A\oplus B=f(U,V)].
\]

\subsection{Equality construction}

The trusted referee/controller also samples an independent uniform bit \(R\).
This bit is not supplied to the devices producing \(A\) and \(B\), and the
behaviour \(P(a,b|u,v)\) is assumed not to depend on \(R\).  We define
\[
X:=f(U,V)\oplus R,
\qquad
G:=A\oplus B\oplus R.
\]
The bit \(X\) is the target bit and \(G\) is the guess bit.

\begin{lemma}[Winning is equality]
\label{lem:winning_equality}
For every realization \((u,v,a,b,r)\),
\[
G=X
\quad\Longleftrightarrow\quad
a\oplus b=f(u,v).
\]
Consequently,
\[
\bbP[G=X]=p_{\rm succ}^{\mathcal{G}}(P).
\]
\end{lemma}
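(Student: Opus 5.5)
The plan is to argue pointwise first and then take expectations. Fix a realization \((u,v,a,b,r)\). By definition \(X=f(u,v)\oplus r\) and \(G=a\oplus b\oplus r\). Since addition modulo two is a group operation on \(\{0,1\}\) and every element is its own inverse, XORing both sides of an equality with the same bit \(r\) is a bijection on \(\{0,1\}\). Applying this map to both sides gives
\[
G=X\iff a\oplus b\oplus r\oplus r=f(u,v)\oplus r\oplus r\iff a\oplus b=f(u,v).
\]
This uses only \(r\oplus r=0\) and associativity and commutativity of \(\oplus\). Hence the events \(\{G=X\}\) and \(\{A\oplus B=f(U,V)\}\) coincide as subsets of the sample space of \((U,V,A,B,R)\).

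For the probabilistic statement, I would write the joint law explicitly as
\[
\bbP(u,v,a,b,r)=\pi(u,v)\,\tfrac12\,P(a,b|u,v).
\]
This factorization rests on two stated assumptions: \(R\) is sampled independently, and the behaviour does not depend on \(R\). Because the two events coincide pointwise, I would sum the indicator \(\mathbf 1[a\oplus b=f(u,v)]\) against this law. The sum over \(r\) contributes a factor \(\sum_r \tfrac12=1\), leaving
\[
\sum_{u,v}\pi(u,v)\sum_{a\oplus b=f(u,v)}P(a,b|u,v),
\]
which is exactly \(p_{\rm succ}^{\mathcal G}(P)\) as defined above.

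There is no real obstacle here. The only point needing care is the marginalization over \(R\), where the behaviour's independence of \(R\) is what allows \(P(a,b|u,v)\) to be used unchanged. Strictly speaking, the probability identity would follow from the pointwise equivalence even without that independence. The assumption is needed only to ensure that \(\bbP[A\oplus B=f(U,V)]\) is indeed the game success probability of the behaviour \(P\) as defined.
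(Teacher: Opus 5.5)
Your proposal is correct and takes essentially the same route as the paper: the paper computes \(G\oplus X=a\oplus b\oplus f(u,v)\) via \(r\oplus r=0\) and then averages over all variables, which matches your pointwise cancellation followed by summation against the joint law. Your explicit factorization \(\pi(u,v)\,\tfrac12\,P(a,b|u,v)\) only spells out the paper's averaging step, and, as you note, the uniformity of \(R\) is not actually used there.
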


\begin{proof}
We compute
\[
G\oplus X
=
(a\oplus b\oplus r)\oplus(f(u,v)\oplus r)
=
a\oplus b\oplus f(u,v),
\]
because \(r\oplus r=0\).  Therefore \(G=X\) if and only if
\(G\oplus X=0\), which is equivalent to \(a\oplus b=f(u,v)\).
Averaging over the distribution of all variables gives
\[
\bbP[G=X]
=
\bbP[A\oplus B=f(U,V)]
=
p_{\rm succ}^{\mathcal{G}}(P).
\]
\end{proof}

The random pad \(R\) also has a useful thermodynamic role.  It removes local
bias from the logical bits.

\begin{lemma}[No local bias in the equality registers]
\label{lem:no_local_bias}
The target bit \(X\) and the guess bit \(G\) are both uniform:
\[
\bbP[X=0]=\bbP[X=1]=\frac12,
\qquad
\bbP[G=0]=\bbP[G=1]=\frac12.
\]
\end{lemma}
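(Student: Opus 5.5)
The plan is a one-time-pad argument: condition on every variable except \(R\), then use the independence of \(R\) from everything else. The setup gives \(R\) uniform and independent of \((U,V)\). It also says the behaviour \(P(a,b|u,v)\) does not depend on \(R\), so \(R\) is independent of the whole tuple \((U,V,A,B)\). I would state this explicitly first:
\[
\bbP[U=u,V=v,A=a,B=b,R=r]=\tfrac12\,\pi(u,v)\,P(a,b|u,v).
\]

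For \(X\), fix \((u,v)\). Then \(X=f(u,v)\oplus r\), and the map \(r\mapsto f(u,v)\oplus r\) is a bijection of \(\{0,1\}\). Hence \(\bbP[X=x\mid U=u,V=v]=\bbP[R=x\oplus f(u,v)]=\tfrac12\) for each \(x\). Averaging over \(\pi\) gives \(\bbP[X=x]=\tfrac12\).

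For \(G\), fix \((u,v,a,b)\) with positive probability. Then \(G=a\oplus b\oplus r\), which is again a bijection in \(r\). By the factorization above, \(\bbP[G=g\mid u,v,a,b]=\bbP[R=g\oplus a\oplus b]=\tfrac12\). Summing over \(\pi(u,v)P(a,b|u,v)\) gives \(\bbP[G=g]=\tfrac12\).

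The only substantive step is the independence of \(R\) from \((A,B)\), which rests on the stated assumption that \(R\) is not supplied to the devices. I would remark that this is where that assumption enters: if \(R\) leaked to the devices, \(G\) could be biased. The conclusion holds for any behaviour, whether local, quantum, or nonsignalling. I would also note that \(X\) and \(G\) are individually uniform but generally correlated. Their correlation is \(\bbP[G=X]=p_{\rm succ}^{\mathcal G}(P)\) by Lemma~\ref{lem:winning_equality}, so the game information sits only in the joint statistics.
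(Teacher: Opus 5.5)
Your proof is correct and is essentially the paper's one-time-pad argument. The treatment of \(X\) is identical. For \(G\), you condition directly on \((u,v,a,b)\) and use the bijection \(r\mapsto a\oplus b\oplus r\). The paper instead writes \(G=X\oplus E\) with the error bit \(E=A\oplus B\oplus f(U,V)\), and uses that \(X\) is independent of the transcript and hence of \(E\).

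Your direct route is slightly cleaner, and your explicit factorization \(\tfrac12\,\pi(u,v)P(a,b|u,v)\) states precisely the independence that the paper invokes informally. The paper's route additionally exhibits the binary-symmetric-channel structure (\(G=X\oplus E\) with \(X\) independent of \(E\)) that it reuses in the appendix.
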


\begin{proof}
For fixed \(u,v\),
\[
X=f(u,v)\oplus R.
\]
Since \(R\) is uniform,
\[
\bbP[X=x|U=u,V=v]=\frac12
\]
for both \(x=0,1\).  Hence \(X\) is uniform.

Now define the error bit
\[
E:=G\oplus X.
\]
From the same cancellation as above,
\[
E=A\oplus B\oplus f(U,V).
\]
Thus \(E\) depends on the game transcript but not on \(R\).  The one-time pad
makes \(X\) independent of the transcript and hence independent of \(E\).  Since
\[
G=X\oplus E,
\]
we get, for \(g\in\{0,1\}\),
\[
\bbP[G=g]
=
\sum_{e=0}^1 \bbP[E=e]\bbP[X=g\oplus e]
=
\sum_{e=0}^1 \bbP[E=e]\frac12
=
\frac12.
\]
\end{proof}

\begin{remark}[Correlation rather than local free energy]
Because \(X\) and \(G\) are individually uniform and stored in degenerate
logical registers, they carry no local nonequilibrium free energy in this
idealized model.  The relevant information is their correlation:
\[
\bbP[G=X]=p_{\rm succ}^{\mathcal{G}}(P).
\]
The battery module below converts this correlation, not a local bias of either
bit, into a charging probability.
\end{remark}

\subsection{Battery module}

The target and guess bits are stored in degenerate logical registers, also
called \(X\) and \(G\).  The fuel and battery have Hamiltonians
\[
H_F=\Delta\ket{1}\!\bra{1}_F,
\qquad
H_W=\Delta\ket{1}\!\bra{1}_W.
\]
At the beginning of the round,
\[
F=\ket{1}_F,
\qquad
W=\ket{0}_W.
\]

Let \(\mathrm{SWAP}_{FW}\) be the two-qubit SWAP on fuel and battery:
\[
\mathrm{SWAP}_{FW}\ket{i}_F\ket{j}_W
=
\ket{j}_F\ket{i}_W.
\]
Since \(F\) and \(W\) have the same gap,
\[
[\mathrm{SWAP}_{FW},H_F+H_W]=0.
\]
Indeed, the only nontrivial action in the one-excitation sector is
\[
\ket{1}_F\ket{0}_W
\leftrightarrow
\ket{0}_F\ket{1}_W,
\]
and both states have total energy \(\Delta\).

Define the equality-controlled battery unitary
\[
U_{\rm bat}
:=
\sum_{x,g\in\{0,1\}}
\ket{x}\!\bra{x}_X
\otimes
\ket{g}\!\bra{g}_G
\otimes
V_{xg},
\]
where
\[
V_{xg}
:=
\begin{cases}
\mathrm{SWAP}_{FW}, & x=g,\\
\Id_{FW}, & x\ne g.
\end{cases}
\]
Since the branches are controlled on orthogonal projectors and each branch is
unitary, \(U_{\rm bat}\) is unitary.  Since the logical registers are
degenerate and each branch commutes with \(H_F+H_W\),
\[
[U_{\rm bat},H_X+H_G+H_F+H_W]=0.
\]

\begin{theorem}[XOR-game battery transduction]
\label{thm:xor_battery_transduction}
For any finite two-player XOR game \(\mathcal{G}\) and any behaviour \(P\), the
battery module satisfies
\[
W_{\rm bat}
=
\Delta\,\mathbf{1}\{G=X\}
\]
in every run.  Therefore
\begin{equation}
\boxed{
\bbE[W_{\rm bat}]
=
\Delta\,p_{\rm succ}^{\mathcal{G}}(P).
}
\label{eq:work_success_main}
\end{equation}
\end{theorem}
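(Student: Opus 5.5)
The plan is to argue run by run, conditioning on the classical values stored in the logical registers, and only afterwards to average. Fix a realization \((u,v,a,b,r)\), so that the registers hold computational basis states \(\ket{x}_X\ket{g}_G\) with \(x=f(u,v)\oplus r\) and \(g=a\oplus b\oplus r\). The full input to the battery module is then the product state \(\ket{x}_X\ket{g}_G\ket{1}_F\ket{0}_W\). Because \(U_{\rm bat}\) is block diagonal in the \(X,G\) computational basis, its action on this state is \(\ket{x}_X\ket{g}_G\otimes V_{xg}\ket{1}_F\ket{0}_W\). Thus only the single branch \(V_{xg}\) needs to be evaluated.

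I will split into two cases. If \(x=g\), then \(V_{xg}=\mathrm{SWAP}_{FW}\), which gives \(\ket{0}_F\ket{1}_W\). The battery ends in \(\ket{1}_W\), and \(W_{\rm bat}=\Tr(H_W\ket1\!\bra1)-\Tr(H_W\ket0\!\bra0)=\Delta\). If \(x\neq g\), then \(V_{xg}=\Id_{FW}\), the battery stays in \(\ket0_W\), and \(W_{\rm bat}=0\). Together these give \(W_{\rm bat}=\Delta\,\mathbf 1\{G=X\}\) in every run.

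For a mixed (classical) distribution over register values, the post-routing reduced battery state is the corresponding convex mixture, since the branches do not interfere on diagonal register states. The energy increase is then the average of the branch values. For the expectation I will take \(\bbE\) over \((U,V)\sim\pi\), the uniform \(R\), and \((A,B)\sim P(\cdot|U,V)\), which by assumption does not depend on \(R\). This yields \(\bbE[W_{\rm bat}]=\Delta\,\bbP[G=X]\). Lemma~\ref{lem:winning_equality} then identifies \(\bbP[G=X]\) with \(p^{\mathcal G}_{\rm succ}(P)\), which gives Eq.~\eqref{eq:work_success_main}.

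The only step needing care is the meaning of "in every run". \(W_{\rm bat}\) is defined through the reduced battery states, so I must check that conditioning on classical register values is legitimate, i.e.\ that the registers are diagonal (classical) at the moment \(U_{\rm bat}\) acts. This holds because they are written by classical wiring from classical data. Energy conservation, \([U_{\rm bat},H_X+H_G+H_F+H_W]=0\), is not needed for the identity itself. It only confirms that the battery gain is matched exactly by the fuel loss.
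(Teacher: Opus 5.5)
Your proposal is correct and matches the paper's proof: a run-by-run case split (the SWAP branch charges the battery by \(\Delta\), the identity branch leaves it in \(\ket0_W\)), then averaging and invoking Lemma~\ref{lem:winning_equality} to identify \(\bbP[G=X]\) with \(p_{\rm succ}^{\mathcal G}(P)\). Your remarks on the block-diagonal action of \(U_{\rm bat}\) and on the classicality of the register states spell out details that the paper leaves implicit.
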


\begin{proof}
If \(X=G\), the SWAP branch is applied, and
\[
\ket{1}_F\ket{0}_W
\longmapsto
\ket{0}_F\ket{1}_W.
\]
The battery gains energy \(\Delta\).  If \(X\ne G\), the identity branch is
applied and the battery remains in \(\ket{0}_W\), so it gains no energy.  Hence
\[
W_{\rm bat}
=
\Delta\,\mathbf{1}\{G=X\}.
\]
Taking expectations and using Lemma~\ref{lem:winning_equality} gives
\[
\bbE[W_{\rm bat}]
=
\Delta\,\bbP[G=X]
=
\Delta\,p_{\rm succ}^{\mathcal{G}}(P).
\]
\end{proof}

\begin{remark}[Why we focus on XOR games]
The battery mechanism itself is more general than XOR games.  Any finite
classical binary predicate \(V(T)\in\{0,1\}\), computed from a transcript \(T\),
can be used to control the same energy-preserving SWAP, giving
\[
\bbE[W_{\rm bat}]=\Delta\,\bbP[V(T)=1].
\]
We focus on XOR games because their local, quantum, and nonsignalling values are
standard and give familiar post-quantum thresholds, including CHSH and chained
Bell inequalities.  The fully general binary-predicate statement is recorded in
Appendix~\ref{app:general_predicate}.
\end{remark}

\begin{corollary}[Resource-class ceilings]
\label{cor:resource_ceilings}
Let \(\mathsf{C}\) be any class of behaviours for the game \(\mathcal{G}\), and
define
\[
\omega_{\mathsf{C}}(\mathcal{G})
:=
\sup_{P\in\mathsf{C}}p_{\rm succ}^{\mathcal{G}}(P).
\]
Then the maximal mean battery charge achievable within \(\mathsf{C}\) is
\begin{equation}
\boxed{
W_{\mathsf{C}}^{\max}(\mathcal{G})
=
\Delta\,\omega_{\mathsf{C}}(\mathcal{G}).
}
\label{eq:class_ceiling_main}
\end{equation}
\end{corollary}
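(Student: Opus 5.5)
The plan is to derive the corollary directly from Theorem~\ref{thm:xor_battery_transduction} by taking a supremum. First, I will make precise what ``maximal mean battery charge achievable within \(\mathsf C\)'' means:
\[
W_{\mathsf C}^{\max}(\mathcal G):=\sup_{P\in\mathsf C}\bbE_P[W_{\rm bat}],
\]
where \(\bbE_P\) denotes the expectation when the correlation resource realizes the behaviour \(P\). All other parts of the protocol are held fixed: the distribution \(\pi\), the uniform pad \(R\) independent of the devices, the initial state \(\ket{1}_F\ket{0}_W\), and the unitary \(U_{\rm bat}\). This is the only place where care is needed, because the trusted-module assumptions are what guarantee that the same module acts for every \(P\in\mathsf C\). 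Consequently the only freedom in optimizing the mean charge is the choice of behaviour.

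Second, for each fixed \(P\in\mathsf C\), I will apply Eq.~\eqref{eq:work_success_main}, which gives \(\bbE_P[W_{\rm bat}]=\Delta\,p_{\rm succ}^{\mathcal G}(P)\). This identity holds for every behaviour, since Theorem~\ref{thm:xor_battery_transduction} and Lemma~\ref{lem:winning_equality} make no assumption on \(P\) beyond its independence from \(R\). The map \(P\mapsto\bbE_P[W_{\rm bat}]\) is therefore the map \(P\mapsto p_{\rm succ}^{\mathcal G}(P)\) multiplied by the constant \(\Delta>0\).

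Third, I will use the fact that multiplication by a positive constant commutes with the supremum:
\[
\sup_{P\in\mathsf C}\Delta\,p_{\rm succ}^{\mathcal G}(P)=\Delta\sup_{P\in\mathsf C}p_{\rm succ}^{\mathcal G}(P)=\Delta\,\omega_{\mathsf C}(\mathcal G).
\]
This yields Eq.~\eqref{eq:class_ceiling_main}. I will also note that \(\bbE_P[W_{\rm bat}]\le\Delta\,\omega_{\mathsf C}(\mathcal G)\) holds for every \(P\in\mathsf C\), which is the form actually used for witnessing. Whether the supremum is attained depends on the class. For the local, quantum and nonsignalling classes of a finite game, the set of achievable success probabilities is the image of a compact set under a linear functional, so the value is a maximum. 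For the quantum class this relies on the closedness of the set of XOR-game quantum values, which is standard via Tsirelson's vector characterization; I would only remark on this point rather than prove it.
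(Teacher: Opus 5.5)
Your proposal is correct and follows the same route as the paper's proof: you apply Theorem~\ref{thm:xor_battery_transduction} to each \(P\in\mathsf C\) and pull the positive constant \(\Delta\) out of the supremum. Your added remarks on attainment are not needed, since the statement is phrased with a supremum. One caution there: for the quantum class, attainment holds for XOR games via Tsirelson's vector characterization, not because the full quantum behaviour set is compact, which it need not be.
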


\begin{proof}
By Theorem~\ref{thm:xor_battery_transduction},
\[
\bbE[W_{\rm bat}]
=
\Delta\,p_{\rm succ}^{\mathcal{G}}(P)
\]
for every behaviour \(P\).  Taking the supremum over \(P\in\mathsf{C}\) gives
\[
W_{\mathsf{C}}^{\max}(\mathcal{G})
=
\sup_{P\in\mathsf{C}}\bbE[W_{\rm bat}]
=
\Delta\sup_{P\in\mathsf{C}}p_{\rm succ}^{\mathcal{G}}(P)
=
\Delta\,\omega_{\mathsf{C}}(\mathcal{G}).
\]
\end{proof}

Thus, under the trusted-module assumptions, if
\[
\bbE[W_{\rm bat}]
>
\Delta\,\omega_{\mathsf{Q}}(\mathcal{G}),
\]
then either the effective behaviour is not quantum-realizable, or one of the
trusted assumptions about the battery module has failed.  With independently
validated module assumptions, this becomes a witness of post-quantumness.

\section{CHSH and chained-game witnesses}
\label{sec:examples}

We now specialize the general theorem to two standard families of XOR games.
The CHSH game gives the clearest post-quantum separation.  The chained games
show that the construction is not tied to the four-setting CHSH algebra.

\subsection{CHSH}

For CHSH,
\[
\mathcal{U}=\mathcal{V}=\{0,1\},
\qquad
\pi(u,v)=\frac14,
\qquad
f(u,v)=uv.
\]
The winning condition is
\[
a\oplus b=uv.
\]
Define correlators
\[
E_{uv}
:=
\sum_{a,b}
(-1)^{a\oplus b}P(a,b|u,v),
\]
and the CHSH expression
\[
S(P):=
E_{00}+E_{01}+E_{10}-E_{11}.
\]
With this convention, local behaviours satisfy \(S(P)\le2\), quantum
behaviours satisfy Tsirelson's bound \(S(P)\le2\sqrt2\), and nonsignalling
behaviours satisfy \(S(P)\le4\), with the algebraic value attained by a PR box
\cite{CHSH1969,Tsirelson1980,PopescuRohrlich1994,BarrettPRA2005}.

For CHSH, the success probability is
\[
p_{\rm win}(P)
=
\frac12+\frac{S(P)}8.
\]
Indeed, the XOR-game bias is
\[
\frac14(E_{00}+E_{01}+E_{10}-E_{11})
=
\frac{S(P)}4,
\]
and \(p_{\rm win}=(1+\mathrm{bias})/2\).

Therefore Theorem~\ref{thm:xor_battery_transduction} gives
\begin{equation}
\boxed{
\frac{\bbE[W_{\rm bat}]}{\Delta}
=
\frac12+\frac{S(P)}8.
}
\label{eq:chsh_work_relation}
\end{equation}

The standard CHSH bounds become the following battery ceilings:
\[
W_{\mathsf{L}}^{\max}
=
\frac34\,\Delta,
\]
\[
W_{\mathsf{Q}}^{\max}
=
\Delta\left(\frac12+\frac{\sqrt2}{4}\right)
=
\Delta\cos^2\frac{\pi}{8},
\]
and
\[
W_{\mathsf{NS}}^{\max}
=
\Delta.
\]
Thus a trusted observation of
\[
\bbE[W_{\rm bat}]
>
\Delta\cos^2\frac{\pi}{8}
\]
is a post-quantum witness for the effective CHSH behaviour, subject to the
trusted-module assumptions.

A noisy PR-box interpolation gives a simple benchmark.  Let
\[
P_\epsilon
=
(1-\epsilon)P_{\rm PR}+\epsilon P_{\rm L},
\]
where \(S(P_{\rm PR})=4\) and \(S(P_{\rm L})=2\).  Then
\[
S(P_\epsilon)=4-2\epsilon,
\]
and hence
\[
\bbE[W_{\rm bat}]
=
\Delta\left(1-\frac{\epsilon}{4}\right).
\]
This remains above the quantum CHSH ceiling exactly when
\[
4-2\epsilon>2\sqrt2,
\]
or
\[
\epsilon<2-\sqrt2.
\]

Figure~\ref{fig:work_vs_chsh} shows the affine relation between the CHSH value
and the normalized mean battery charge.

\begin{figure}[t]
\centering
\includegraphics[width=0.85\linewidth]{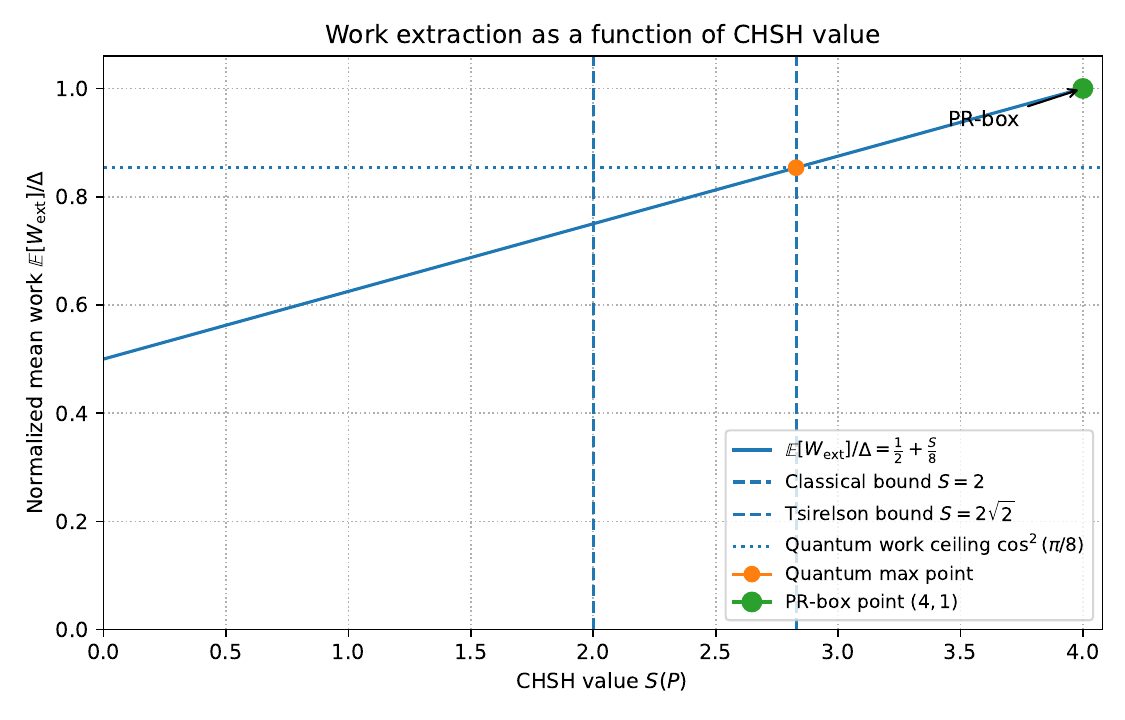}
\caption{
Mean battery charge for the CHSH game.  The normalized value satisfies
\(\bbE[W_{\rm bat}]/\Delta=\frac12+\frac{S(P)}8\).
The classical bound \(S=2\) gives \(3\Delta/4\).  Tsirelson's bound
\(S=2\sqrt2\) gives the quantum ceiling
\(\Delta\cos^2(\pi/8)\).  A PR box reaches the single-excitation cap
\(\Delta\).
}
\label{fig:work_vs_chsh}
\end{figure}

\subsection{Chained Bell games}

The chained Bell games form a standard family of XOR games generalizing CHSH
\cite{BraunsteinCaves1990,Wehner2006}.  Fix \(N\ge2\).  The question sets are
\[
\mathcal{U}=\mathcal{V}=\{0,1,\ldots,N-1\}.
\]
The referee samples uniformly from the \(2N\) input pairs
\[
(u,v)=(j,j),
\qquad
(u,v)=(j+1,j),
\]
where \(j=0,\ldots,N-1\) and addition is modulo \(N\).  All edges require equal
outputs except one wrap-around edge, which requires unequal outputs:
\[
f(j,j)=0,
\]
\[
f(j+1,j)=0
\quad
(j=0,\ldots,N-2),
\]
and
\[
f(0,N-1)=1.
\]

For this game, the standard values are
\[
\omega_{\mathsf{L}}(\mathcal{G}_N)
=
1-\frac{1}{2N},
\]
\[
\omega_{\mathsf{Q}}(\mathcal{G}_N)
=
\cos^2\left(\frac{\pi}{4N}\right),
\]
and
\[
\omega_{\mathsf{NS}}(\mathcal{G}_N)
=
1.
\]
The local value follows because deterministic assignments cannot satisfy all
\(2N\) parity constraints around the cycle, while they can satisfy \(2N-1\) of
them.  The quantum value is the chained Tsirelson value
\cite{BraunsteinCaves1990,Wehner2006}.  The nonsignalling value is one, because
a nonsignalling box can satisfy each allowed XOR constraint with uniformly
random local marginals.

By Corollary~\ref{cor:resource_ceilings}, the corresponding battery ceilings are
\[
W_{\mathsf{L}}^{\max}(\mathcal{G}_N)
=
\Delta\left(1-\frac{1}{2N}\right),
\]
\[
W_{\mathsf{Q}}^{\max}(\mathcal{G}_N)
=
\Delta\cos^2\left(\frac{\pi}{4N}\right),
\]
and
\[
W_{\mathsf{NS}}^{\max}(\mathcal{G}_N)
=
\Delta.
\]
Thus, under the trusted-module assumptions,
\[
\bbE[W_{\rm bat}]
>
\Delta\cos^2\left(\frac{\pi}{4N}\right)
\]
witnesses that the effective chained-game behaviour is not quantum-realizable.

For large \(N\), the quantum-to-nonsignalling battery gap is
\[
\Delta-
\Delta\cos^2\left(\frac{\pi}{4N}\right)
=
\Delta\sin^2\left(\frac{\pi}{4N}\right)
=
\Delta\left[
\frac{\pi^2}{16N^2}
+
O\left(\frac{1}{N^4}\right)
\right].
\]
Thus the gap becomes smaller as \(N\) grows.  Within this family, CHSH gives the
largest and cleanest quantum-to-PR separation.

\section{Reversible controller and cyclic bookkeeping}
\label{sec:controller_bookkeeping}

The battery transducer can be described in feed-forward form: after the game
outputs are produced, the equality condition \(G=X\) is computed and used to
control the fuel-battery SWAP.  We now explain how this feed-forward step can be
implemented reversibly on degenerate logical registers, and then discuss the
cyclic thermodynamic bookkeeping.

\subsection{Reversible-controller implementation}

Let \(D\) denote the classical data registers storing the transcript
\[
(u,v,a,b,r).
\]
We take these registers to be degenerate:
\[
H_D=0.
\]
Let \(M\) be a degenerate memory bit initialized in \(\ket{0}_M\), and let
\(A_{\rm anc}\) denote finitely many degenerate ancilla bits initialized in
\(\ket{0\cdots0}_{A_{\rm anc}}\).  Thus all logical controller registers have
zero Hamiltonian.

For a fixed game transcript, define the success bit
\[
Z(u,v,a,b)
:=
\mathbf{1}\{a\oplus b=f(u,v)\}.
\]
Equivalently,
\[
Z(u,v,a,b)
=
1\oplus a\oplus b\oplus f(u,v).
\]
Since the game is finite, \(f\) is a finite Boolean function.  Hence it can be
computed reversibly using standard reversible logic, with enough temporary
ancillas.  We may therefore choose a reversible circuit \(V_{\rm comp}\) such
that
\[
V_{\rm comp}
\ket{u,v,a,b,r}_D
\ket{0}_M
\ket{0\cdots0}_{A_{\rm anc}}
=
\ket{u,v,a,b,r}_D
\ket{Z(u,v,a,b)}_M
\ket{0\cdots0}_{A_{\rm anc}}.
\]
All temporary workspace is uncomputed at the end of \(V_{\rm comp}\).

The feedback SWAP is
\[
U_{\rm fb}
=
\ket{1}\!\bra{1}_M\otimes \mathrm{SWAP}_{FW}
+
\ket{0}\!\bra{0}_M\otimes \Id_{FW}.
\]
Since \(M\) is degenerate and \(\mathrm{SWAP}_{FW}\) commutes with
\(H_F+H_W\),
\[
[U_{\rm fb},H_M+H_F+H_W]=0.
\]
Define
\[
U_{\rm rev}
:=
V_{\rm comp}^{\dagger}U_{\rm fb}V_{\rm comp}.
\]
Then \(U_{\rm rev}\) is energy preserving with respect to the explicit
fuel-battery Hamiltonian and implements the same work statistics as the
feed-forward transducer.

Indeed, \(V_{\rm comp}\) writes the success bit \(Z\) into \(M\), \(U_{\rm fb}\)
moves the excitation from \(F\) to \(W\) if and only if \(Z=1\), and
\(V_{\rm comp}^{\dagger}\) restores \(M\) and the temporary ancillas to their
initial blank states.  Thus
\[
W_{\rm bat}
=
\Delta\,\mathbf{1}\{a\oplus b=f(u,v)\},
\]
and therefore
\[
\bbE[W_{\rm bat}]
=
\Delta\,p_{\rm succ}^{\mathcal{G}}(P).
\]

\begin{remark}[Reversible, not fully autonomous]
The construction above is a reversible-controller implementation.  We do not
claim here a fully autonomous clock-Hamiltonian model of the controller.  The
logical gates are trusted control operations on degenerate registers.  Their
role is to show that the feed-forward computation need not change the explicit
fuel-battery energy bookkeeping.
\end{remark}

\begin{remark}[Hamiltonian implementation]
A Hamiltonian implementation of the controlled excitation transfer, including
the phase appearing in the resonant exchange pulse, is discussed in
Sec.~\ref{sec:battery_passivity_charging}.  In the reversible-controller
description used here the control registers are classical computational-basis
registers, so this phase does not affect the battery statistics.
\end{remark}

\subsection{Boundary of the thermodynamic cycle}

The reversible-controller construction preserves the transcript registers
\(D=(u,v,a,b,r)\).  In the cyclic bookkeeping below, these registers are treated
as external game data supplied by the referee/game interface.  The thermodynamic
cycle explicitly includes only the local fuel, battery, controller memory when
it is persistently stored, and their reset or restoration operations.

If a concrete implementation instead reuses the same physical registers for the
full transcript, then their erasure or overwrite cost must also be included.
That additional cost is at least the Landauer cost of the stored transcript and
is no smaller than the cost of erasing the compressed success/failure bit.

\subsection{Fuel restoration}

Let
\[
p:=p_{\rm succ}^{\mathcal{G}}(P).
\]
The mean battery charge is
\[
\bbE[W_{\rm bat}]=\Delta p.
\]

In a successful round, the fuel-battery state changes as
\[
\ket{1}_F\ket{0}_W
\longmapsto
\ket{0}_F\ket{1}_W.
\]
The fuel excitation has been consumed.  To reuse the same fuel system in the
initial state \(\ket{1}_F\), one must restore the excitation, which costs at
least \(\Delta\) in the ideal energy-eigenstate model.

In a failed round, the state remains
\[
\ket{1}_F\ket{0}_W.
\]
The fuel excitation is still present and no fuel re-excitation is needed.  Hence
the minimal average fuel restoration cost is success-weighted:
\[
\bbE[W_{\rm fuel}]\ge \Delta p.
\]
This is the relevant cyclic accounting when unused fuel excitations are retained
and reused.

\subsection{Reversible-controller cycle}

In the reversible-controller implementation, the success bit is computed, used,
and uncomputed.  No persistent success/failure memory remains.  Thus there is
no Landauer erasure cost for that bit.

The fuel-battery contribution to the cycle satisfies
\[
\bbE[W_{\rm bat}]-\bbE[W_{\rm fuel}]
\le
\Delta p-\Delta p
=
0.
\]
In the ideal reversible limit this balance is saturated.  Thus the reversible
implementation gives no positive net work once the restoration of consumed fuel
excitations is included.

\subsection{Measured-memory implementation}

A different implementation may measure or persistently store the success/failure
bit
\[
Z=\mathbf{1}\{a\oplus b=f(u,v)\}.
\]
Then
\[
\bbP[Z=1]=p,
\qquad
\bbP[Z=0]=1-p,
\]
and
\begin{equation}
H(Z)=h_2(p),
\qquad
h_2(p):=-p\log_2 p-(1-p)\log_2(1-p).
\label{eq:binary_entropy_success_memory}
\end{equation}
If this memory is blindly reset before the next run, Landauer's principle gives
\cite{Landauer1961,Bennett1982,ReebWolf2014,Goold2016}
\[
Q_{\rm reset}
\ge
\kB T\ln2\,h_2(p).
\]

Therefore the measured-memory cycle obeys
\[
\bbE[W_{\rm net}]
\le
\bbE[W_{\rm bat}]
-
\bbE[W_{\rm fuel}]
-
\bbE[Q_{\rm reset}],
\]
and hence
\begin{equation}
\bbE[W_{\rm net}]
\le
-\kB T\ln2\,h_2(p)
\le0.
\label{eq:measured_memory_no_positive_work}
\end{equation}

For a perfect nonsignalling strategy, \(p=1\), the success/failure memory is
deterministic and \(h_2(1)=0\).  Even then, the battery charge is exactly
balanced by the fuel restoration cost.  Thus neither quantum nor post-quantum
correlations generate positive cyclic work in this model.

\subsection{Summary of the thermodynamic role}

The battery transducer is an energy-routing witness.  The supplied excitation is
the energetic resource.  The correlation resource determines the probability
with which that excitation reaches the battery.  The cyclic bookkeeping confirms
that the protocol does not convert Bell nonlocality or post-quantumness into a
thermodynamic fuel.

\section{Finite statistics and readout imperfections}
\label{sec:statistics_robustness}

In the ideal transducer, each round produces a binary work value
\[
W_i\in\{0,\Delta\}.
\]
Thus the game success probability can be estimated directly from battery data.
This section gives simple finite-statistics and readout-robustness statements.

\subsection{Finite statistics from work data}

Define the observed work bit
\[
Z_i:=\frac{W_i}{\Delta}\in\{0,1\},
\]
and let
\[
\hat p_n
:=
\frac1n\sum_{i=1}^n Z_i.
\]
In the ideal i.i.d. setting,
\[
\bbE[Z_i]=p_{\rm succ}^{\mathcal{G}}(P).
\]
Hoeffding's inequality \cite{Hoeffding1963} gives
\[
\bbP\left[
p_{\rm succ}^{\mathcal{G}}(P)
<
\hat p_n-\varepsilon
\right]
\le
e^{-2n\varepsilon^2}.
\]
Thus, for error probability \(\alpha\), define
\[
\varepsilon_n(\alpha)
:=
\sqrt{\frac{1}{2n}\ln\frac1\alpha}.
\]
Then, except with probability at most \(\alpha\),
\[
p_{\rm succ}^{\mathcal{G}}(P)
\ge
p_L
:=
\hat p_n-\varepsilon_n(\alpha).
\]

A finite-data nonlocality certificate is obtained if
\[
p_L>\omega_{\mathsf{L}}(\mathcal{G}),
\]
and a finite-data post-quantumness certificate is obtained if
\[
p_L>\omega_{\mathsf{Q}}(\mathcal{G}).
\]
Equivalently, in work units,
\[
\frac{\hat W_n}{\Delta}
-
\varepsilon_n(\alpha)
>
\omega_{\mathsf{Q}}(\mathcal{G}),
\]
where
\[
\hat W_n:=\frac1n\sum_{i=1}^n W_i.
\]

For CHSH, this becomes
\[
\hat p_n-\varepsilon_n(\alpha)
>
\cos^2\frac{\pi}{8}.
\]
Equivalently, using
\[
S=8\left(p_{\rm win}-\frac12\right),
\]
a lower confidence bound on \(S\) is
\[
S_L
=
8\left(
\hat p_n-\varepsilon_n(\alpha)-\frac12
\right),
\]
and post-quantumness is certified if
\[
S_L>2\sqrt2.
\]

\subsection{Martingale version}

The i.i.d. assumption can be weakened.  Let \(\mathcal{F}_{i-1}\) be the
history before round \(i\), and define
\[
p_i:=\bbE[Z_i|\mathcal{F}_{i-1}].
\]
Let
\[
\bar p_n:=\frac1n\sum_{i=1}^n p_i.
\]
The variables
\[
D_i:=Z_i-p_i
\]
are martingale differences with
\[
\bbE[D_i|\mathcal{F}_{i-1}]=0,
\qquad
|D_i|\le1.
\]
Azuma--Hoeffding gives \cite{Azuma1967,Hoeffding1963}
\[
\bbP\left[
\bar p_n
<
\hat p_n-\varepsilon
\right]
\le
e^{-2n\varepsilon^2}.
\]
Thus the same lower confidence bound applies to the time-averaged success
probability:
\[
\bar p_n
\ge
\hat p_n-\sqrt{\frac{1}{2n}\ln\frac1\alpha}
\]
except with probability at most \(\alpha\).

This version is useful when the effective behaviour may drift between rounds.
The conclusion then concerns the average success probability over the tested
rounds.

\subsection{Imperfect battery readout}

We now include a simple readout-error model.  Let
\[
\eta_1
:=
\bbP[\text{battery reports charged}\mid \mathrm{win}]
\]
be the true-positive probability, and let
\[
\eta_0
:=
\bbP[\text{battery reports charged}\mid \mathrm{fail}]
\]
be the false-positive probability.  Assume
\[
\eta_1>\eta_0.
\]
If
\[
p=p_{\rm succ}^{\mathcal{G}}(P),
\]
then the observed charging probability is
\[
p_{\rm obs}
=
\eta_1 p+\eta_0(1-p)
=
\eta_0+(\eta_1-\eta_0)p.
\]
Therefore
\[
p
=
\frac{p_{\rm obs}-\eta_0}{\eta_1-\eta_0}.
\]

If the calibration parameters are known, a lower confidence bound
\(p_{{\rm obs},L}\) gives
\[
p_{\rm succ}^{\mathcal{G}}(P)
\ge
\frac{p_{{\rm obs},L}-\eta_0}{\eta_1-\eta_0}.
\]
Thus the corrected post-quantumness condition is
\[
\frac{p_{{\rm obs},L}-\eta_0}{\eta_1-\eta_0}
>
\omega_{\mathsf{Q}}(\mathcal{G}).
\]

\subsection{Imperfect random pad}
\label{subsec:imperfect_pad}

The ideal construction assumes that the pad bit \(R\) is uniform, private, and
independent of the devices producing \(A,B\).  Uniformity is used to remove
local bias from the equality registers.  It is not needed for the equality
identity itself.

Suppose instead that \(R\) is still private and independent, but slightly biased:
\begin{equation}
\bbP[R=1]=\frac12+\delta,
\qquad
|\delta|\le\frac12.
\label{eq:biased_R}
\end{equation}
Then the algebraic cancellation
\[
G\oplus X
=
A\oplus B\oplus f(U,V)
\]
still holds.  Hence
\begin{equation}
G=X
\quad\Longleftrightarrow\quad
A\oplus B=f(U,V),
\label{eq:equality_survives_biased_R}
\end{equation}
and the battery identity
\begin{equation}
\bbE[W_{\rm bat}]
=
\Delta p_{\rm succ}^{\mathcal G}(P)
\label{eq:work_identity_biased_R}
\end{equation}
is unchanged.

What changes is the local bias of the logical registers.  Let
\[
q_f:=\bbP[f(U,V)=1].
\]
Then
\begin{align}
\bbP[X=1]
&=
\bbP[f(U,V)\oplus R=1]\\
&=
q_f\left(\frac12-\delta\right)
+
(1-q_f)\left(\frac12+\delta\right)\\
&=
\frac12+\delta(1-2q_f).
\label{eq:X_bias}
\end{align}
Therefore
\begin{equation}
\left|\bbP[X=1]-\frac12\right|
\le
|\delta|.
\label{eq:X_bias_bound}
\end{equation}
Similarly, if
\[
q_{AB}:=\bbP[A\oplus B=1],
\]
then
\begin{equation}
\bbP[G=1]
=
\frac12+\delta(1-2q_{AB}),
\label{eq:G_bias}
\end{equation}
and hence
\begin{equation}
\left|\bbP[G=1]-\frac12\right|
\le
|\delta|.
\label{eq:G_bias_bound}
\end{equation}

Thus imperfect randomness of \(R\) does not spoil the work--success-probability
identity, provided \(R\) remains private and independent of the devices.  It
does, however, weaken the statement that the individual logical registers have
no local bias.  If the logical registers are exactly degenerate, this bias does
not change their internal energy.  If one assigns an additional information
bookkeeping cost to biased logical records, that cost must be accounted for
separately.

If \(R\) is available to the devices producing \(A,B\), the situation is
different: the behaviour may effectively depend on \(r\), and the usual
resource-class values \(\omega_{\mathsf L}\), \(\omega_{\mathsf Q}\), and
\(\omega_{\mathsf{NS}}\) for the original game no longer apply without
modification.  For this reason the pad is assumed to be generated inside the
trusted referee/controller module and not supplied to the correlation resource.

\subsection{Conservative calibration bounds}

If the calibration parameters are not known exactly, the conservative lower
bound must be chosen carefully.  Since
\[
p=
\frac{p_{\rm obs}-\eta_0}{\eta_1-\eta_0},
\]
the inferred value of \(p\) decreases when \(\eta_0\) increases, and it also
decreases when \(\eta_1\) increases, provided the numerator is positive.

Therefore, if calibration gives
\[
\eta_0\le \eta_0^+,
\qquad
\eta_1\le \eta_1^+,
\]
with
\[
\eta_1^+>\eta_0^+,
\]
then a conservative lower bound is
\[
p_{\rm succ}^{\mathcal{G}}(P)
\ge
\frac{p_{{\rm obs},L}-\eta_0^+}{\eta_1^+-\eta_0^+},
\]
whenever the numerator is positive.  If no useful upper bound on \(\eta_1\) is
available, one may use the trivial upper bound \(\eta_1\le1\), giving
\[
p_{\rm succ}^{\mathcal{G}}(P)
\ge
\frac{p_{{\rm obs},L}-\eta_0^+}{1-\eta_0^+}.
\]
The right-hand side should be truncated to the interval \([0,1]\).

For CHSH, the corrected post-quantumness condition is
\[
\frac{p_{{\rm obs},L}-\eta_0^+}{\eta_1^+-\eta_0^+}
>
\cos^2\frac{\pi}{8}.
\]

\subsection{Symmetric work-bit flips}

As a simple benchmark, suppose the observed work bit is obtained from the ideal
work bit by a symmetric flip with probability \(\varepsilon\).  Then
\[
\eta_1=1-\varepsilon,
\qquad
\eta_0=\varepsilon.
\]
For an ideal PR-box behaviour, \(p=1\), so
\[
p_{\rm obs}=1-\varepsilon.
\]
To remain above the CHSH quantum ceiling without statistical uncertainty, one
needs
\[
1-\varepsilon
>
\cos^2\frac{\pi}{8}.
\]
Thus
\[
\varepsilon
<
1-\cos^2\frac{\pi}{8}
=
\sin^2\frac{\pi}{8}
\approx0.146447.
\]
This is the maximum symmetric work-bit error rate for which an ideal PR-box
signal remains above the quantum CHSH battery ceiling.

\section{Discussion}
\label{sec:discussion}

We have introduced a trusted battery witness of post-quantum Bell correlations.
The construction takes the success event of a finite XOR game and writes it into
an explicit work-storage degree of freedom.  In each round, a supplied excitation
is transferred from a fuel qubit to a battery qubit exactly when the game is won.
The resulting mean battery charge is
\[
\bbE[W_{\rm bat}]
=
\Delta\,p_{\rm succ}^{\mathcal{G}}(P).
\]
Thus the local, quantum, and nonsignalling values of the game become local,
quantum, and nonsignalling battery-charge ceilings.

For CHSH, this gives
\[
\frac{\bbE[W_{\rm bat}]}{\Delta}
=
\frac12+\frac{S(P)}8.
\]
The classical, quantum, and nonsignalling thresholds are respectively
\[
\frac34,
\qquad
\cos^2\frac{\pi}{8},
\qquad
1.
\]
A mean battery charge above the Tsirelson-calibrated threshold therefore
witnesses post-quantumness of the effective behaviour, provided the
thermodynamic module is independently trusted.  The chained Bell games give a
second family of examples, where the same transduction theorem converts the
known game values into corresponding battery thresholds.

The construction is intentionally modest in its thermodynamic claims.  It is not
a heat engine and it does not convert Bell correlations into net work.  The
energetic resource is the supplied fuel excitation.  The correlation resource
only determines the probability that this excitation is routed into the battery.
When the protocol is made cyclic, successful rounds require restoration of the
fuel excitation.  If a persistent success/failure memory is used, Landauer's
principle adds an erasure cost.  With these costs included, the full-cycle net
work is non-positive.

This distinction is important for interpreting the witness.  The battery charge
is not the thermodynamic value of the Bell correlation as a fuel.  Rather, it is
a calibrated energetic representation of the Bell-game success probability.  The
role of thermodynamics is to ensure that the readout is implemented without
hiding energy in the controller: the battery is charged only by an
energy-preserving transfer of a supplied excitation.

As discussed in Sec.~\ref{sec:battery_passivity_charging}, this stored energy is
equal to event-resolved extractable energy when the charge record is retained.
If the charge record is ignored, the ergotropy of the averaged battery state is
instead \(\Delta\max\{0,2p-1\}\), with
\(p=p_{\rm succ}^{\mathcal G}(P)\).

The one-time-pad bit \(R\) also has a useful thermodynamic interpretation.  It
makes the target and guess bits individually unbiased.  In the degenerate
logical-register model, neither bit carries local nonequilibrium free energy.
The game performance is instead encoded in the correlation between them, namely
in the probability that \(G=X\).  The battery transducer converts precisely this
correlation into a charging probability.

The witness is not device-independent.  It assumes trusted Hamiltonians, trusted
classical wiring, a calibrated battery gap, a correctly implemented
energy-preserving controlled SWAP, and a calibrated battery readout.  If these
assumptions are not independently validated, a work value above the quantum
ceiling could indicate a failure of the trusted module rather than genuine
post-quantumness. Thus the operational statement is conditional: under the
trusted-module assumptions, a battery charge above the quantum game value
certifies that the effective behaviour is not quantum-realizable.

The model is idealized, but the idealizations are explicit.  Imperfect
exchange pulses, battery relaxation, thermal excitation, dephasing, biased pad
randomness, and readout errors do not change the algebraic structure of the
witness; they change the calibration relating the observed battery signal to
the underlying game success probability.  For this reason the witness should be
used as a calibrated thermodynamic module.  Its conclusion is conditional: a
work value above the quantum threshold indicates either post-quantum effective
correlations or failure of one of the trusted calibrations.

The present construction is complementary to information-engine approaches. In
a Szilard-engine setting, correlations or side information can change the work
available to a feedback controller through mutual information.  A complementary
CHSH-based side-information construction was considered in
Ref.~\cite{cwiklinski2026_Szilard}. The difference is that the present paper
does not assign work value to side information.  Instead, it routes an explicit
fuel excitation into a battery according to the Bell-game winning event. It is also distinct from heat-based witnesses of quantum properties
\cite{deOliveiraJuniorBraskLipkaBartosik2025}: here the signal is not heat
generated in contact with a reservoir, but the charge of an explicit battery
receiving an excitation supplied at the beginning of the round.

Several extensions are natural.  One can study other XOR games for which the
gap between the quantum and nonsignalling values is more robust to noise or
finite statistics.  One can replace the ideal two-level battery by more realistic
work-storage systems with finite resolution, finite-time control, or small
Hamiltonian mismatches. One can also refine the statistical analysis to include
loss, input-dependent detection efficiency, or composable confidence bounds in
the presence of memory effects.  Finally, the same basic idea can be applied to
more general finite games with binary winning predicates: an energy-preserving
controlled SWAP can route a supplied excitation whenever the predicate is
satisfied.  XOR games were chosen here because they give a clean connection to
CHSH, chained Bell inequalities, and standard resource-class game values.

The general binary-predicate formulation also clarifies the scope of the result:
the thermodynamic module realizes a calibrated energy readout of a classical
predicate, while the physics enters through which predicate and which resource
class determine the relevant threshold.

The main message is therefore simple.  Bell-game performance can be written
directly into an explicit battery degree of freedom without making the controller
an unaccounted work source.  The result is a trusted thermodynamic witness: not
a new principle limiting correlations, and not a work-extraction engine, but a
battery-explicit way of representing local, quantum, and post-quantum
correlation strengths.

\section{Acknowledgements}
We thank Marcin Pawłowski, Sumit Rout and Anubhav Chaturvedi for discussions. This work is supported by the IRAP/MAB programme, project no. FENG.02.01IP.05-0006/23, financed by the MAB FENG program 2021-2027, Priority FENG.02, Measure FENG.02.01., with the support of the FNP (Foundation for Polish Science).

\appendix

\section{Detailed proof of the battery transduction theorem}
\label{app:detailed_transduction}

This appendix gives the details behind the energy-preserving transduction
statement used in Sec.~\ref{sec:xor_transducer}.  The main text keeps only the
short proof.

\subsection{Independence induced by the one-time pad}

Recall that the referee samples an independent uniform bit \(R\), and defines
\[
X=f(U,V)\oplus R,
\qquad
G=A\oplus B\oplus R.
\]
Define also the error bit
\[
E:=G\oplus X.
\]
Then
\[
E=A\oplus B\oplus f(U,V).
\]
Thus \(E\) depends on \((U,V,A,B)\), but not on \(R\).

\begin{lemma}[Independence of \(X\) and \(E\)]
\label{lem:appendix_XE_independent}
The target bit \(X\) is independent of the error bit \(E\).
\end{lemma}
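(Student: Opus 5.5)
The plan is to show that the joint law of \((X,E)\) factorizes, and to do so by first conditioning on the full game transcript \(T=(U,V,A,B)\). The modelling assumptions give everything needed. The pad \(R\) is sampled by the referee independently of \((U,V)\). It is not supplied to the devices, and the behaviour \(P(a,b|u,v)\) does not depend on \(R\). Together these mean \(R\) is independent of the whole transcript \(T\), so we may write
\[
\bbP[T=t,R=r]=\bbP[T=t]\,\tfrac12 .
\]

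Next, fix a transcript value \(t=(u,v,a,b)\) and look at the conditional behaviour of the two bits. The error bit \(E=a\oplus b\oplus f(u,v)=:e(t)\) is then deterministic. The target bit \(X=f(u,v)\oplus R\) is a bijective image of the uniform bit \(R\), so it is uniform. Hence, for all \(x,e\in\{0,1\}\),
\[
\bbP[X=x,E=e\mid T=t]=\tfrac12\,\mathbf{1}\{e(t)=e\}.
\]

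Then average over \(t\) and compare with the marginals. Summing the conditional identity against \(\bbP[T=t]\) gives
\[
\bbP[X=x,E=e]=\tfrac12\,\bbP[E=e].
\]
From Lemma~\ref{lem:no_local_bias}, or simply by summing this identity over \(e\), we have \(\bbP[X=x]=\tfrac12\). Therefore \(\bbP[X=x,E=e]=\bbP[X=x]\,\bbP[E=e]\) for all \(x,e\), which is exactly the claimed independence.

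The only step that needs real care is the first: making precise that \(R\) is independent of \((U,V,A,B)\) jointly, and not merely of \((U,V)\). This requires invoking the standing assumption that the behaviour generating \(A,B\) has no access to \(R\). Once that is stated explicitly, the rest is a one-line conditioning calculation.
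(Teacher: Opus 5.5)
Your proof is correct and follows essentially the same route as the paper, which expands \(\bbP[X=x,E=e]\) by total probability over \((u,v,a,b,r)\) with weights \(\pi(u,v)P(a,b|u,v)\bbP[R=r]\). There, exactly one \(r\) is compatible with a given \(x\), so the \(r\)-sum gives \(1/2\) and \(\bbP[X=x,E=e]=\tfrac12\bbP[E=e]=\bbP[X=x]\bbP[E=e]\); your conditioning on the transcript is the same computation, with the independence of \(R\) from \((U,V,A,B)\) stated explicitly rather than built into the product form of the weights.
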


\begin{proof}
Let \(x,e\in\{0,1\}\).  By the law of total probability,
\begin{align}
\bbP[X=x,E=e]
&=
\sum_{u,v}\pi(u,v)
\sum_{a,b}P(a,b|u,v)
\sum_{r=0}^{1}
\bbP[R=r]\,
\mathbf{1}\{x=f(u,v)\oplus r\}
\mathbf{1}\{e=a\oplus b\oplus f(u,v)\}.
\end{align}
For fixed \(u,v,x\), there is exactly one value of \(r\) satisfying
\[
x=f(u,v)\oplus r,
\]
namely
\[
r=x\oplus f(u,v).
\]
Since \(R\) is uniform,
\[
\bbP[R=r]=\frac12.
\]
Therefore the sum over \(r\) gives a factor \(1/2\), and
\begin{align}
\bbP[X=x,E=e]
&=
\frac12
\sum_{u,v}\pi(u,v)
\sum_{a,b}P(a,b|u,v)
\mathbf{1}\{e=a\oplus b\oplus f(u,v)\} \\
&=
\frac12\,\bbP[E=e].
\end{align}
Since \(X\) is uniform,
\[
\bbP[X=x]=\frac12.
\]
Thus
\[
\bbP[X=x,E=e]=\bbP[X=x]\bbP[E=e].
\]
Hence \(X\) and \(E\) are independent.
\end{proof}

\begin{corollary}[Binary symmetric channel]
\label{cor:appendix_bsc}
The relation between \(X\) and \(G\) is a binary symmetric channel:
\[
G=X\oplus E,
\]
where \(E\) is independent of \(X\).  Its crossover probability is
\[
q=\bbP[E=1]=1-p_{\rm succ}^{\mathcal G}(P).
\]
\end{corollary}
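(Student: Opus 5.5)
The corollary follows almost directly from the independence result just established. The plan has three short steps.

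First, I will record the algebraic identity $G = X\oplus E$. It holds pointwise from the definition $E:=G\oplus X$, because $X\oplus X=0$ gives $X\oplus E = X\oplus G\oplus X = G$. So for every realization the guess bit is the target bit flipped by $E$.

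Second, I will show that this is a binary symmetric channel. I will compute the conditional law of $G$ given $X$ using Lemma~\ref{lem:appendix_XE_independent}. For $x,g\in\{0,1\}$,
\[
\bbP[G=g\mid X=x]=\bbP[E=g\oplus x\mid X=x]=\bbP[E=g\oplus x].
\]
Conditioning is well defined because Lemma~\ref{lem:no_local_bias} gives $\bbP[X=x]=\tfrac12>0$. Hence the flip probability $\bbP[G\neq X\mid X=x]=\bbP[E=1]$ is the same for both inputs $x$. This input-independent crossover is exactly the defining property of a binary symmetric channel, and independence of $E$ from $X$ is what makes it input-independent.

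Third, I will identify the crossover probability. We have $\bbP[E=1]=\bbP[G\neq X]=1-\bbP[G=X]$. By Lemma~\ref{lem:winning_equality}, $\bbP[G=X]=p_{\rm succ}^{\mathcal G}(P)$, which gives $q=1-p_{\rm succ}^{\mathcal G}(P)$.

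No step is a real obstacle, since the independence lemma does the substantive work. The only point needing care is the conditioning on $X=x$, which Lemma~\ref{lem:no_local_bias} settles as noted above.
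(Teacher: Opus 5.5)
Your proposal is correct and follows essentially the same route as the paper: $G=X\oplus E$ holds by definition, independence comes from Lemma~\ref{lem:appendix_XE_independent}, and $q=1-\bbP[G=X]=1-p_{\rm succ}^{\mathcal G}(P)$ follows from the winning-equality equivalence. Your explicit computation $\bbP[G=g\mid X=x]=\bbP[E=g\oplus x]$, with conditioning justified by $\bbP[X=x]=\tfrac12>0$, simply spells out the input-independent crossover that the paper leaves implicit.
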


\begin{proof}
By definition,
\[
G=X\oplus E.
\]
By Lemma~\ref{lem:appendix_XE_independent}, \(E\) is independent of \(X\).
Moreover,
\[
E=0
\quad\Longleftrightarrow\quad
G=X
\quad\Longleftrightarrow\quad
A\oplus B=f(U,V).
\]
Therefore
\[
\bbP[E=0]=p_{\rm succ}^{\mathcal G}(P),
\]
and hence
\[
q=\bbP[E=1]=1-p_{\rm succ}^{\mathcal G}(P).
\]
\end{proof}

\subsection{Energy preservation of the equal-gap SWAP}

The fuel and battery Hamiltonians are
\[
H_F=\Delta\ket{1}\!\bra{1}_F,
\qquad
H_W=\Delta\ket{1}\!\bra{1}_W.
\]
The computational basis vectors of \(F\otimes W\) have energies
\[
E_{00}=0,
\qquad
E_{10}=\Delta,
\qquad
E_{01}=\Delta,
\qquad
E_{11}=2\Delta.
\]
The SWAP unitary satisfies
\[
\mathrm{SWAP}_{FW}\ket{00}=\ket{00},
\]
\[
\mathrm{SWAP}_{FW}\ket{10}=\ket{01},
\]
\[
\mathrm{SWAP}_{FW}\ket{01}=\ket{10},
\]
and
\[
\mathrm{SWAP}_{FW}\ket{11}=\ket{11}.
\]
It leaves the zero- and two-excitation sectors fixed, and it only exchanges the
two states in the one-excitation sector.  Since the one-excitation states have
the same total energy, the SWAP preserves every eigenspace of \(H_F+H_W\).
Therefore
\[
[\mathrm{SWAP}_{FW},H_F+H_W]=0.
\]

\subsection{Unitarity of the equality-controlled battery operation}

The equality-controlled battery unitary is
\[
U_{\rm bat}
=
\sum_{x,g\in\{0,1\}}
\ket{x}\!\bra{x}_X
\otimes
\ket{g}\!\bra{g}_G
\otimes
V_{xg},
\]
where
\[
V_{xg}
=
\begin{cases}
\mathrm{SWAP}_{FW}, & x=g,\\
\Id_{FW}, & x\ne g.
\end{cases}
\]
Let
\[
\Pi_{xg}
:=
\ket{x}\!\bra{x}_X\otimes\ket{g}\!\bra{g}_G.
\]
The projectors \(\Pi_{xg}\) are mutually orthogonal and resolve the identity:
\[
\Pi_{xg}\Pi_{x'g'}
=
\delta_{x,x'}\delta_{g,g'}\Pi_{xg},
\qquad
\sum_{x,g}\Pi_{xg}=\Id_{XG}.
\]
Each \(V_{xg}\) is unitary.  Therefore
\begin{align}
U_{\rm bat}^{\dagger}U_{\rm bat}
&=
\left(\sum_{x,g}\Pi_{xg}\otimes V_{xg}^{\dagger}\right)
\left(\sum_{x',g'}\Pi_{x'g'}\otimes V_{x'g'}\right)\\
&=
\sum_{x,g,x',g'}
\Pi_{xg}\Pi_{x'g'}
\otimes
V_{xg}^{\dagger}V_{x'g'}\\
&=
\sum_{x,g}
\Pi_{xg}\otimes V_{xg}^{\dagger}V_{xg}\\
&=
\sum_{x,g}
\Pi_{xg}\otimes \Id_{FW}\\
&=
\Id_{XGFW}.
\end{align}
The same calculation gives
\[
U_{\rm bat}U_{\rm bat}^{\dagger}=\Id_{XGFW}.
\]
Thus \(U_{\rm bat}\) is unitary.

\subsection{Energy preservation of the equality-controlled operation}

The logical registers are degenerate:
\[
H_X=H_G=0.
\]
Hence
\[
H_{\rm tot}=H_X+H_G+H_F+H_W=H_F+H_W.
\]
For each branch, \(V_{xg}\) is either the identity or \(\mathrm{SWAP}_{FW}\).
Both commute with \(H_F+H_W\).  Therefore every block
\[
\Pi_{xg}\otimes V_{xg}
\]
commutes with \(H_{\rm tot}\), and so does their sum:
\[
[U_{\rm bat},H_{\rm tot}]=0.
\]

\section{General binary-predicate battery routing}
\label{app:general_predicate}

The main text focuses on XOR games because they connect directly to CHSH,
chained Bell inequalities, and standard local, quantum, and nonsignalling game
values.  However, the energy-routing step is more general.

Let \(T\) be any finite classical transcript, distributed according to some
probability distribution, and let
\[
V:T\to\{0,1\}
\]
be a binary predicate.  The event \(V(T)=1\) may represent winning a game,
satisfying a Bell predicate, passing a test, or any other classical condition.

\begin{proposition}[Battery routing for a binary predicate]
\label{prop:binary_predicate_routing}
Let \(V(T)\in\{0,1\}\) be reversibly computable into a degenerate memory bit.
Then there is an energy-preserving fuel-battery unitary such that
\[
W_{\rm bat}=\Delta V(T)
\]
in every run.  Consequently,
\[
\bbE[W_{\rm bat}]=\Delta\,\bbP[V(T)=1].
\]
\end{proposition}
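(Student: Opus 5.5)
The plan is to mirror the reversible-controller construction of Sec.~\ref{sec:controller_bookkeeping}, with the XOR success bit \(Z\) replaced by the general predicate \(V(T)\).

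\emph{Setup.} Store the transcript in a degenerate register \(D\) with \(H_D=0\). Add a degenerate memory bit \(M\) initialized in \(\ket0_M\), and degenerate ancillas \(A_{\rm anc}\) initialized in \(\ket{0\cdots0}\). By hypothesis there is a reversible circuit \(V_{\rm comp}\) acting as
\[
V_{\rm comp}\ket{t}_D\ket0_M\ket{0\cdots0}_{A_{\rm anc}}
=\ket{t}_D\ket{V(t)}_M\ket{0\cdots0}_{A_{\rm anc}} .
\]
Such a circuit exists for any finite Boolean function by standard compute--copy--uncompute reversible logic. It is a permutation of computational basis states of zero-energy registers, so it commutes with \(H_D+H_M+H_{A_{\rm anc}}+H_F+H_W\), because it acts trivially on \(F\otimes W\).

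\emph{Routing unitary.} Take the feedback unitary
\[
U_{\rm fb}=\ket1\!\bra1_M\otimes\mathrm{SWAP}_{FW}+\ket0\!\bra0_M\otimes\Id_{FW}.
\]
This is unitary by the orthogonal-projector argument used for \(U_{\rm bat}\) in Appendix~\ref{app:detailed_transduction}. It is energy preserving because \([\mathrm{SWAP}_{FW},H_F+H_W]=0\), as shown there, and because \(H_M=0\). Set
\[
U:=V_{\rm comp}^\dagger U_{\rm fb}V_{\rm comp}.
\]
A product of unitaries that each commute with the total Hamiltonian again commutes with it, so \(U\) is energy preserving.

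\emph{Action on a basis input.} Fix a run with transcript \(t\), and start from \(\ket{t}_D\ket0_M\ket{0\cdots0}\ket1_F\ket0_W\).
\begin{itemize}
\item After \(V_{\rm comp}\), the memory holds \(\ket{V(t)}_M\).
\item \(U_{\rm fb}\) then maps \(\ket1_F\ket0_W\) to \(\ket0_F\ket1_W\) if \(V(t)=1\), and leaves it unchanged otherwise.
\item Since \(V_{\rm comp}\) acts trivially on \(FW\), applying \(V_{\rm comp}^\dagger\) restores \(M\) and the ancillas to blank while keeping the \(FW\) state.
\end{itemize}
Hence the battery ends in \(\ket{V(t)}_W\), which gives \(W_{\rm bat}=\Delta V(t)\) in every run. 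Averaging over the distribution of \(T\) then yields \(\bbE[W_{\rm bat}]=\Delta\,\bbP[V(T)=1]\), exactly as in Theorem~\ref{thm:xor_battery_transduction}.

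\emph{Main obstacle.} The only delicate step is the uncomputation. One must check that \(V_{\rm comp}^\dagger\) returns \(M\) and the ancillas to blank even though \(FW\) has changed in between. This holds because \(V_{\rm comp}\) is of the form \(C_{DMA}\otimes\Id_{FW}\), and on the branch fixed by \(t\) the state after \(U_{\rm fb}\) is still the product \(\ket{t,V(t),0}\otimes\ket{\psi_{FW}}\). The computation is therefore exact, with no entanglement left between the controller and the battery.
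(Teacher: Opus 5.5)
Your proof is correct and follows the paper's argument: reversibly write \(V(T)\) into the degenerate memory bit \(M\), apply the \(M\)-controlled unitary \(U_{\rm fb}\), read off \(W_{\rm bat}=\Delta V(T)\) branch by branch, and then average. Your one addition is the final uncomputation by \(V_{\rm comp}^\dagger\); the paper's appendix proof omits it and simply leaves \(V(T)\) stored in \(M\), but it is exactly the \(U_{\rm rev}\) construction of Sec.~\ref{sec:controller_bookkeeping}, and you justify it correctly via \(V_{\rm comp}=C_{DMA}\otimes\Id_{FW}\).
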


\begin{proof}
Let \(M\) be a degenerate memory bit initialized in \(\ket0_M\).  Since \(T\) is
finite, the predicate \(V(T)\) can be computed reversibly using finitely many
degenerate ancillas:
\[
\ket{T}\ket0_M\ket{0\cdots0}_{A_{\rm anc}}
\mapsto
\ket{T}\ket{V(T)}_M\ket{0\cdots0}_{A_{\rm anc}}.
\]
Then apply
\[
U_{\rm fb}
=
\ket1\!\bra1_M\otimes \mathrm{SWAP}_{FW}
+
\ket0\!\bra0_M\otimes \Id_{FW}.
\]
This unitary is energy preserving because \(M\) is degenerate and
\(\mathrm{SWAP}_{FW}\) commutes with \(H_F+H_W\).  If \(V(T)=1\), the fuel
excitation is moved to the battery and \(W_{\rm bat}=\Delta\).  If \(V(T)=0\),
the identity branch acts and \(W_{\rm bat}=0\).  Therefore
\[
W_{\rm bat}=\Delta V(T).
\]
Taking expectations gives
\[
\bbE[W_{\rm bat}]=\Delta\,\bbP[V(T)=1].
\]
\end{proof}

\begin{remark}
This proposition shows that the battery mechanism itself is not special to XOR
games.  The reason to use XOR games in the main text is that their local,
quantum, and nonsignalling values are standard and yield transparent
post-quantum thresholds.
\end{remark}

\section{Details for the chained Bell games}
\label{app:chained_details}

This appendix gives the elementary parts of the chained-game analysis.  The
quantum value is quoted from the standard chained Tsirelson bound
\cite{BraunsteinCaves1990,Wehner2006}.

\subsection{Classical value}

For the chained game \(\mathcal G_N\), the \(2N\) tested constraints are
\[
\alpha_j\oplus\beta_j=0,
\qquad
j=0,\ldots,N-1,
\]
\[
\alpha_{j+1}\oplus\beta_j=0,
\qquad
j=0,\ldots,N-2,
\]
and
\[
\alpha_0\oplus\beta_{N-1}=1.
\]
Here \(\alpha_j\) and \(\beta_j\) are deterministic local outputs.

A deterministic strategy cannot satisfy all constraints.  Indeed, from
\[
\alpha_j\oplus\beta_j=0
\]
we get
\[
\alpha_j=\beta_j
\]
for all \(j\).  From
\[
\alpha_{j+1}\oplus\beta_j=0
\]
for \(j=0,\ldots,N-2\), we get
\[
\alpha_{j+1}=\beta_j=\alpha_j.
\]
Therefore
\[
\alpha_0=\alpha_1=\cdots=\alpha_{N-1}.
\]
Since
\[
\beta_{N-1}=\alpha_{N-1},
\]
we obtain
\[
\alpha_0\oplus\beta_{N-1}=0,
\]
contradicting the final condition
\[
\alpha_0\oplus\beta_{N-1}=1.
\]
Thus at least one of the \(2N\) constraints must fail.

This upper bound is tight: choosing all deterministic outputs to be zero wins
all equality constraints and loses only the final inequality constraint.
Therefore
\[
\omega_{\mathsf L}(\mathcal G_N)=1-\frac{1}{2N}.
\]

\subsection{Nonsignalling value}

For every allowed input pair \((u,v)\), define
\[
P(a,b|u,v)
=
\begin{cases}
\frac12, & a\oplus b=f(u,v),\\
0, & a\oplus b\ne f(u,v).
\end{cases}
\]
Then the winning condition is satisfied with probability one.  Alice's marginal
is uniform:
\[
\sum_bP(a,b|u,v)=\frac12
\]
for both \(a=0,1\), independently of \(v\).  Bob's marginal is also uniform:
\[
\sum_aP(a,b|u,v)=\frac12
\]
for both \(b=0,1\), independently of \(u\).  Hence the behaviour is
nonsignalling on the support of the game.  It can be extended to all input pairs
with uniform marginals.  Therefore
\[
\omega_{\mathsf{NS}}(\mathcal G_N)=1.
\]

\subsection{Quantum value}

The quantum value is the standard chained Tsirelson value:
\[
\omega_{\mathsf Q}(\mathcal G_N)
=
\cos^2\left(\frac{\pi}{4N}\right).
\]
For \(N=2\), this gives
\[
\omega_{\mathsf Q}(\mathcal G_2)=\cos^2\left(\frac{\pi}{8}\right),
\]
which is the usual CHSH quantum winning probability.

\section{Convex-content bounds from battery data}
\label{app:content_bounds}

The battery value can also be used to lower-bound the fraction of a behaviour
that must lie outside a chosen resource class.

Let
\[
\mathsf C\subseteq \mathsf D
\]
be two behaviour classes, and assume
\[
\omega_{\mathsf D}(\mathcal G)>\omega_{\mathsf C}(\mathcal G).
\]
Suppose that
\[
P=(1-q)P_{\mathsf C}+qP_{\mathsf D},
\]
where
\[
P_{\mathsf C}\in\mathsf C,
\qquad
P_{\mathsf D}\in\mathsf D.
\]

\begin{proposition}[Lower bound on non-\(\mathsf C\) content]
\label{prop:content_bound}
If
\[
p=p_{\rm succ}^{\mathcal G}(P),
\]
then
\[
q
\ge
\frac{p-\omega_{\mathsf C}(\mathcal G)}
{\omega_{\mathsf D}(\mathcal G)-\omega_{\mathsf C}(\mathcal G)}
\]
whenever the numerator is positive.  In battery units,
\[
q
\ge
\frac{\bbE[W_{\rm bat}]/\Delta-\omega_{\mathsf C}(\mathcal G)}
{\omega_{\mathsf D}(\mathcal G)-\omega_{\mathsf C}(\mathcal G)}.
\]
\end{proposition}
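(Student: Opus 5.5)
The plan is to use linearity of the success probability in the behaviour, bound each component by its class value, and then rearrange. First, the success functional is affine in \(P\). Explicitly,
\[
p_{\rm succ}^{\mathcal G}(P)=\sum_{u,v}\pi(u,v)\sum_{a,b}P(a,b|u,v)\,\mathbf 1\{a\oplus b=f(u,v)\}.
\]
This is linear in \(P(a,b|u,v)\), so the decomposition \(P=(1-q)P_{\mathsf C}+qP_{\mathsf D}\) gives
\[
p=(1-q)\,p_{\rm succ}^{\mathcal G}(P_{\mathsf C})+q\,p_{\rm succ}^{\mathcal G}(P_{\mathsf D}).
\]

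Second, bound each term by the supremum defining the class value, as in Corollary~\ref{cor:resource_ceilings}. Since \(P_{\mathsf C}\in\mathsf C\) and \(P_{\mathsf D}\in\mathsf D\), we have \(p_{\rm succ}^{\mathcal G}(P_{\mathsf C})\le\omega_{\mathsf C}\) and \(p_{\rm succ}^{\mathcal G}(P_{\mathsf D})\le\omega_{\mathsf D}\). The weights \(1-q\) and \(q\) lie in \([0,1]\), so
\[
p\le(1-q)\omega_{\mathsf C}+q\,\omega_{\mathsf D}=\omega_{\mathsf C}+q(\omega_{\mathsf D}-\omega_{\mathsf C}).
\]
Subtracting \(\omega_{\mathsf C}\) and dividing by the strictly positive gap \(\omega_{\mathsf D}-\omega_{\mathsf C}\) yields the claimed lower bound on \(q\). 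When the numerator \(p-\omega_{\mathsf C}\) is nonpositive, the bound is weaker than the trivial bound \(q\ge0\), so it is informative only when \(p>\omega_{\mathsf C}\); the inequality itself holds in either case.

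Third, the battery-unit form follows by substituting \(p=\bbE[W_{\rm bat}]/\Delta\) from Theorem~\ref{thm:xor_battery_transduction}, which holds for every behaviour and in particular for \(P\).

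There is no real obstacle. The only point that needs care is that the mixture weights are nonnegative, which is what allows each component to be bounded separately. It is also worth noting that the hypothesis \(\mathsf C\subseteq\mathsf D\) is not used in the argument beyond guaranteeing that the gap \(\omega_{\mathsf D}-\omega_{\mathsf C}\) is meaningful.
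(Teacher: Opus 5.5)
Your proposal is correct and follows essentially the same route as the paper's proof: use the affinity of the success probability under the mixture $P=(1-q)P_{\mathsf C}+qP_{\mathsf D}$, bound each component by its class value, rearrange using $\omega_{\mathsf D}(\mathcal G)>\omega_{\mathsf C}(\mathcal G)$, and substitute $p=\bbE[W_{\rm bat}]/\Delta$. Your side remarks are also accurate: the inequality holds even when the numerator is nonpositive, and $\mathsf C\subseteq\mathsf D$ is not actually used.
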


\begin{proof}
The success probability is affine in the behaviour:
\[
p_{\rm succ}^{\mathcal G}(P)
=
(1-q)p_{\rm succ}^{\mathcal G}(P_{\mathsf C})
+
q p_{\rm succ}^{\mathcal G}(P_{\mathsf D}).
\]
Using the class values,
\[
p_{\rm succ}^{\mathcal G}(P_{\mathsf C})\le\omega_{\mathsf C}(\mathcal G),
\]
and
\[
p_{\rm succ}^{\mathcal G}(P_{\mathsf D})\le\omega_{\mathsf D}(\mathcal G).
\]
Therefore
\begin{align}
p
&\le
(1-q)\omega_{\mathsf C}(\mathcal G)
+
q\omega_{\mathsf D}(\mathcal G)\\
&=
\omega_{\mathsf C}(\mathcal G)
+
q\left[
\omega_{\mathsf D}(\mathcal G)-\omega_{\mathsf C}(\mathcal G)
\right].
\end{align}
Rearranging gives the stated bound.  The battery form follows from
\[
p=\bbE[W_{\rm bat}]/\Delta.
\]
\end{proof}

\subsection{CHSH nonlocal and post-quantum content}

For CHSH,
\[
\omega_{\mathsf L}=\frac34,
\qquad
\omega_{\mathsf Q}=\cos^2\frac{\pi}{8},
\qquad
\omega_{\mathsf{NS}}=1.
\]
A lower bound on nonsignalling nonlocal content is obtained by taking
\[
\mathsf C=\mathsf L,
\qquad
\mathsf D=\mathsf{NS}.
\]
Then
\[
q_{\rm NL}
\ge
\frac{\bbE[W_{\rm bat}]/\Delta-\frac34}{1-\frac34}
=
4\frac{\bbE[W_{\rm bat}]}{\Delta}-3.
\]
Using
\[
\frac{\bbE[W_{\rm bat}]}{\Delta}
=
\frac12+\frac{S}{8},
\]
this becomes
\[
q_{\rm NL}
\ge
\frac{S-2}{2}.
\]

For post-quantum content, take
\[
\mathsf C=\mathsf Q,
\qquad
\mathsf D=\mathsf{NS}.
\]
Then
\[
q_{\rm postQ}
\ge
\frac{\bbE[W_{\rm bat}]/\Delta-\cos^2(\pi/8)}
{1-\cos^2(\pi/8)}.
\]
In terms of the CHSH value,
\[
q_{\rm postQ}
\ge
\frac{S-2\sqrt2}{4-2\sqrt2}.
\]

\section{CHSH monogamy in battery form}
\label{app:monogamy}

For a tripartite nonsignalling behaviour \(P(a,b,c|x,y,z)\), let \(S_{AB}\) be
the CHSH value of the marginal shared by Alice and Bob, and let \(S_{AC}\) be
the CHSH value of the marginal shared by Alice and Charlie.  The standard CHSH
monogamy relation gives \cite{TonerVerstraete2006}
\[
S_{AB}+S_{AC}\le4,
\]
for consistently oriented CHSH expressions.

For the battery transducer applied to the two marginals,
\[
\bbE[W_{AB}]
=
\Delta\left(\frac12+\frac{S_{AB}}8\right),
\]
and
\[
\bbE[W_{AC}]
=
\Delta\left(\frac12+\frac{S_{AC}}8\right).
\]
Adding these identities gives
\[
\bbE[W_{AB}]+\bbE[W_{AC}]
=
\Delta\left[
1+\frac{S_{AB}+S_{AC}}8
\right].
\]
Using
\[
S_{AB}+S_{AC}\le4,
\]
we obtain
\[
\boxed{
\bbE[W_{AB}]+\bbE[W_{AC}]
\le
\frac32\Delta.
}
\]
Thus the usual nonsignalling CHSH monogamy relation becomes a battery-monogamy
bound.

\section{Exact binomial confidence intervals}
\label{app:binomial_intervals}

The main text uses Hoeffding and Azuma--Hoeffding bounds because they are simple
and analytic.  In the i.i.d. setting, the work bits are Bernoulli random
variables, so one can also use exact or sharper binomial intervals.

Let
\[
Z_i=\frac{W_i}{\Delta}\in\{0,1\},
\]
and let
\[
k=\sum_{i=1}^n Z_i,
\qquad
\hat p=\frac{k}{n}.
\]
In the ideal i.i.d. transducer model,
\[
k\sim {\rm Binomial}(n,p),
\]
where
\[
p=p_{\rm succ}^{\mathcal G}(P).
\]

\subsection{Clopper--Pearson interval}

A two-sided Clopper--Pearson interval \cite{ClopperPearson1934} with error
probability \(\alpha\) is
\[
[p_L,p_U],
\]
where, for \(0<k<n\),
\[
p_L=
{\rm Beta}^{-1}\left(\frac{\alpha}{2};\,k,\,n-k+1\right),
\]
and
\[
p_U=
{\rm Beta}^{-1}\left(1-\frac{\alpha}{2};\,k+1,\,n-k\right).
\]
Here \({\rm Beta}^{-1}(q;a,b)\) is the \(q\)-quantile of the beta distribution
with parameters \(a,b\).  The endpoint conventions are
\[
p_L=0 \quad\text{if } k=0,
\]
and
\[
p_U=1 \quad\text{if } k=n.
\]

A one-sided lower confidence bound is obtained by replacing \(\alpha/2\) by
\(\alpha\):
\[
p_L^{(1)}
=
{\rm Beta}^{-1}\left(\alpha;\,k,\,n-k+1\right).
\]

\subsection{Wilson interval}

The Wilson interval \cite{Wilson1927} is often shorter while maintaining good
coverage.  Let
\[
z=\Phi^{-1}\left(1-\frac{\alpha}{2}\right),
\]
where \(\Phi\) is the standard normal cumulative distribution function.  The
Wilson interval is
\[
\left[
\frac{
\hat p+\frac{z^2}{2n}
-
z\sqrt{\frac{\hat p(1-\hat p)}{n}+\frac{z^2}{4n^2}}
}{
1+\frac{z^2}{n}
},
\,
\frac{
\hat p+\frac{z^2}{2n}
+
z\sqrt{\frac{\hat p(1-\hat p)}{n}+\frac{z^2}{4n^2}}
}{
1+\frac{z^2}{n}
}
\right].
\]

\subsection{Mapping to CHSH}

Any confidence interval
\[
p\in[p_L,p_U]
\]
gives a battery interval
\[
\bbE[W_{\rm bat}]
\in
[\Delta p_L,\Delta p_U].
\]
For CHSH,
\[
S=8\left(p-\frac12\right).
\]
Thus
\[
S\in
\left[
8\left(p_L-\frac12\right),
8\left(p_U-\frac12\right)
\right].
\]
A finite-data post-quantumness certificate is obtained whenever
\[
p_L>\cos^2\frac{\pi}{8},
\]
or equivalently
\[
8\left(p_L-\frac12\right)>2\sqrt2.
\]

\section{Memory reset variants}
\label{app:memory_variants}

The Landauer term in the main text refers to a compressed success/failure
memory.  Here we spell out how the cost depends on what is stored.

\subsection{Compressed success memory}

If the only persistent memory is
\[
Z=\mathbf{1}\{\mathrm{win}\},
\]
then
\[
\bbP[Z=1]=p,
\qquad
\bbP[Z=0]=1-p,
\]
where
\[
p=p_{\rm succ}^{\mathcal G}(P).
\]
The memory entropy is
\[
H(Z)=h_2(p).
\]
Blind erasure costs at least
\[
Q_{\rm reset}
\ge
\kB T\ln2\,h_2(p).
\]

\subsection{Full transcript memory}

If the implementation stores the full transcript
\[
T=(U,V,R,A,B),
\]
then the erasure cost is governed by \(H(T)\), not merely by \(h_2(p)\).
Since \(Z\) is a deterministic function of \(T\),
\[
H(T)\ge H(Z)=h_2(p).
\]
Thus erasing the full transcript is at least as costly as erasing the compressed
success/failure bit.

\subsection{Side-information-assisted reset}

If the erasing agent has side information \(Y\) correlated with the memory, then
the relevant classical entropy can be reduced to a conditional entropy
\(H(Z|Y)\).  The present work deliberately uses blind reset of the persistent
local memory, so such side-information-assisted reductions are not used.

\subsection{Reversible uncomputation}

In the reversible-controller implementation, the success bit is computed, used,
and uncomputed.  No persistent \(Z\) remains.  Therefore no Landauer erasure
cost is assigned to the success bit in that implementation.

\section{Detailed fuel-battery balance}
\label{app:fuel_balance}

The initial fuel-battery state is
\[
\ket{1}_F\ket{0}_W.
\]
The initial fuel energy is
\[
E_F^{\rm in}=\Delta,
\]
and the initial battery energy is
\[
E_W^{\rm in}=0.
\]

If the game is won, the SWAP branch gives
\[
\ket{1}_F\ket{0}_W
\longmapsto
\ket{0}_F\ket{1}_W.
\]
Thus
\[
E_F^{\rm out}=0,
\qquad
E_W^{\rm out}=\Delta,
\]
and
\[
\Delta E_F=-\Delta,
\qquad
\Delta E_W=+\Delta.
\]

If the game is lost, the identity branch gives
\[
\ket{1}_F\ket{0}_W
\longmapsto
\ket{1}_F\ket{0}_W.
\]
Thus
\[
E_F^{\rm out}=\Delta,
\qquad
E_W^{\rm out}=0,
\]
and
\[
\Delta E_F=0,
\qquad
\Delta E_W=0.
\]

Combining the two cases,
\[
\Delta E_W
=
\Delta\,\mathbf{1}\{\mathrm{win}\},
\]
and
\[
\Delta E_F
=
-\Delta\,\mathbf{1}\{\mathrm{win}\}.
\]
Therefore
\[
\Delta E_F+\Delta E_W=0
\]
in every run.

Averaging gives
\[
\bbE[\Delta E_W]=\Delta p,
\]
and
\[
\bbE[\Delta E_F]=-\Delta p.
\]
Thus restoring the fuel to \(\ket1_F\) costs at least
\[
\Delta p
\]
on average in the ideal energy-eigenstate model.  This is why the minimal
cyclic fuel-restoration cost is success-weighted rather than automatically
equal to \(\Delta\) per attempt.

\bibliographystyle{apsrev4-2}
\bibliography{refs}

@article{LostaglioArxiv1410,
  title = {Quantum Coherence, Time-Translation Symmetry, and Thermodynamics},
  author = {Lostaglio, Matteo and Korzekwa, Kamil and Jennings, David and Rudolph, Terry},
  journal = {Phys. Rev. X},
  volume = {5},
  issue = {2},
  pages = {021001},
  numpages = {11},
  year = {2015},
  month = {Apr},
  publisher = {American Physical Society},
  doi = {10.1103/PhysRevX.5.021001},
  url = {https://link.aps.org/doi/10.1103/PhysRevX.5.021001}
}

@article{LostaglioNatComm2015,
  author  = {Lostaglio, Matteo and Jennings, David and Rudolph, Terry},
  title   = {Description of quantum coherence in thermodynamic processes requires constraints beyond free energy},
  journal = {Nature Communications},
  volume  = {6},
  pages   = {6383},
  year    = {2015},
  doi     = {10.1038/ncomms7383}
}

@article{BrandaoPNAS2015,
  author  = {Brand{\~a}o, Fernando G. S. L. and Horodecki, Micha{\l} and Ng, Nelly and Oppenheim, Jonathan and Wehner, Stephanie},
  title   = {The second laws of quantum thermodynamics},
  journal = {Proceedings of the National Academy of Sciences},
  volume  = {112},
  number  = {11},
  pages   = {3275--3279},
  year    = {2015},
  doi     = {10.1073/pnas.1411728112}
}

@article{HorodeckiNatComm2013,
  author  = {Horodecki, Micha{\l} and Oppenheim, Jonathan},
  title   = {Fundamental limitations for quantum and nanoscale thermodynamics},
  journal = {Nature Communications},
  volume  = {4},
  pages   = {2059},
  year    = {2013},
  doi     = {10.1038/ncomms3059}
}

@article{PerryPRX2018,
  author  = {Perry, Christopher and {\'C}wikli{\'n}ski, Piotr and Anders, Janet and Horodecki, Micha{\l} and Oppenheim, Jonathan},
  title   = {A Sufficient Set of Experimentally Implementable Thermal Operations for Small Systems},
  journal = {Physical Review X},
  volume  = {8},
  pages   = {041049},
  year    = {2018},
  doi     = {10.1103/PhysRevX.8.041049}
}

@article{ETO2018,
  author  = {Lostaglio, Matteo and Alhambra, {\'A}lvaro M. and Perry, Christopher},
  title   = {Elementary Thermal Operations},
  journal = {Quantum},
  volume  = {2},
  pages   = {52},
  year    = {2018},
  doi     = {10.22331/q-2018-02-08-52}
}

@article{SagawaUeda2008,
  author  = {Sagawa, Takahiro and Ueda, Masahito},
  title   = {Second Law of Thermodynamics with Discrete Quantum Feedback Control},
  journal = {Physical Review Letters},
  volume  = {100},
  pages   = {080403},
  year    = {2008},
  doi     = {10.1103/PhysRevLett.100.080403}
}

@article{SagawaUeda2010,
  author  = {Sagawa, Takahiro and Ueda, Masahito},
  title   = {Generalized Jarzynski Equality under Nonequilibrium Feedback Control},
  journal = {Physical Review E},
  volume  = {82},
  pages   = {021101},
  year    = {2010},
  doi     = {10.1103/PhysRevE.82.021101}
}

@article{PopescuRohrlich1994,
  author  = {Popescu, Sandu and Rohrlich, Daniel},
  title   = {Quantum nonlocality as an axiom},
  journal = {Foundations of Physics},
  volume  = {24},
  number  = {3},
  pages   = {379--385},
  year    = {1994},
  doi     = {10.1007/BF02058098}
}

@article{CHSH1969,
  author  = {Clauser, John F. and Horne, Michael A. and Shimony, Abner and Holt, Richard A.},
  title   = {Proposed Experiment to Test Local Hidden-Variable Theories},
  journal = {Physical Review Letters},
  volume  = {23},
  pages   = {880--884},
  year    = {1969},
  doi     = {10.1103/PhysRevLett.23.880}
}

@article{Wilson1927,
  author  = {Wilson, Edwin B.},
  title   = {Probable Inference, the Law of Succession, and Statistical Inference},
  journal = {Journal of the American Statistical Association},
  volume  = {22},
  number  = {158},
  pages   = {209--212},
  year    = {1927},
  doi     = {10.1080/01621459.1927.10502953}
}

@article{ClopperPearson1934,
  author  = {Clopper, Charles J. and Pearson, Egon S.},
  title   = {The Use of Confidence or Fiducial Limits Illustrated in the Case of the Binomial},
  journal = {Biometrika},
  volume  = {26},
  number  = {4},
  pages   = {404--413},
  year    = {1934},
  doi     = {10.1093/biomet/26.4.404}
}

@article{Tsirelson1980,
  author  = {Tsirelson, Boris S.},
  title   = {Quantum generalizations of Bell's inequality},
  journal = {Letters in Mathematical Physics},
  volume  = {4},
  number  = {2},
  pages   = {93--100},
  year    = {1980},
  doi     = {10.1007/BF00417500}
}

@article{Landauer1961,
  author  = {Landauer, Rolf},
  title   = {Irreversibility and Heat Generation in the Computing Process},
  journal = {IBM Journal of Research and Development},
  volume  = {5},
  number  = {3},
  pages   = {183--191},
  year    = {1961},
  doi     = {10.1147/rd.53.0183}
}

@article{Bennett1982,
  author  = {Bennett, Charles H.},
  title   = {The thermodynamics of computation---a review},
  journal = {International Journal of Theoretical Physics},
  volume  = {21},
  number  = {12},
  pages   = {905--940},
  year    = {1982},
  doi     = {10.1007/BF02084158}
}

@article{BarrettPRA2005,
  author  = {Barrett, Jonathan and Linden, Noah and Massar, Serge and Pironio, Stefano and Popescu, Sandu and Roberts, David},
  title   = {Nonlocal correlations as an information-theoretic resource},
  journal = {Physical Review A},
  volume  = {71},
  pages   = {022101},
  year    = {2005},
  doi     = {10.1103/PhysRevA.71.022101}
}

@article{ReebWolf2014,
  title={An improved Landauer principle with finite-size corrections},
  author={Reeb, David and Wolf, Michael M.},
  journal={New Journal of Physics},
  volume={16},
  pages={103011},
  year={2014},
  doi={10.1088/1367-2630/16/10/103011}
}

@article{Goold2016,
  title={The role of quantum information in thermodynamics---a topical review},
  author={Goold, John and Huber, Marcus and Riera, Arnau and del Rio, Lidia and Skrzypczyk, Paul},
  journal={Journal of Physics A: Mathematical and Theoretical},
  volume={49},
  number={14},
  pages={143001},
  year={2016},
  doi={10.1088/1751-8113/49/14/143001}
}

@misc{TonerVerstraete2006,
      title={Monogamy of Bell correlations and Tsirelson's bound}, 
      author={Benjamin Toner and Frank Verstraete},
      year={2006},
      eprint={quant-ph/0611001},
      archivePrefix={arXiv},
      primaryClass={quant-ph},
      url={https://arxiv.org/abs/quant-ph/0611001}, 
}

@article{Parrondo2015,
  author  = {Parrondo, Juan M. R. and Horowitz, Jordan M. and Sagawa, Takahiro},
  title   = {Thermodynamics of information},
  journal = {Nature Physics},
  volume  = {11},
  pages   = {131--139},
  year    = {2015},
  doi     = {10.1038/nphys3230}
}

@inproceedings{CleveHoyerTonerWatrous2004,
  author        = {Cleve, Richard and H{\o}yer, Peter and Toner, Benjamin and Watrous, John},
  title         = {Consequences and Limits of Nonlocal Strategies},
  booktitle     = {Proceedings of the 19th IEEE Annual Conference on Computational Complexity},
  pages         = {236--249},
  year          = {2004},
  publisher     = {IEEE Computer Society},
  doi           = {10.1109/CCC.2004.1313847},
  eprint        = {quant-ph/0404076},
  archivePrefix = {arXiv},
  primaryClass  = {quant-ph}
}

@article{BraunsteinCaves1990,
  author  = {Braunstein, Samuel L. and Caves, Carlton M.},
  title   = {Wringing out better {Bell} inequalities},
  journal = {Annals of Physics},
  volume  = {202},
  number  = {1},
  pages   = {22--56},
  year    = {1990},
  doi     = {10.1016/0003-4916(90)90339-P}
}

@article{Wehner2006,
  author  = {Wehner, Stephanie},
  title   = {Tsirelson bounds for generalized {Clauser-Horne-Shimony-Holt} inequalities},
  journal = {Physical Review A},
  volume  = {73},
  pages   = {022110},
  year    = {2006},
  doi     = {10.1103/PhysRevA.73.022110},
  eprint  = {quant-ph/0510076},
  archivePrefix = {arXiv}
}

@misc{cwiklinski2026_Szilard,
      title={Thermodynamic Value of XOR-Game-Induced Side Information in a Szilard Engine}, 
      author={Piotr Ćwikliński},
      year={2026},
      eprint={2605.12044},
      archivePrefix={arXiv},
      primaryClass={quant-ph},
      url={https://arxiv.org/abs/2605.12044}, 
}

@article{Hoeffding1963,
  author  = {Hoeffding, Wassily},
  title   = {Probability Inequalities for Sums of Bounded Random Variables},
  journal = {Journal of the American Statistical Association},
  volume  = {58},
  number  = {301},
  pages   = {13--30},
  year    = {1963},
  doi     = {10.1080/01621459.1963.10500830}
}

@article{Azuma1967,
  author  = {Azuma, Kazuoki},
  title   = {Weighted Sums of Certain Dependent Random Variables},
  journal = {Tohoku Mathematical Journal},
  volume  = {19},
  number  = {3},
  pages   = {357--367},
  year    = {1967},
  doi     = {10.2748/tmj/1178243286}
}

@misc{RoutRavichandranHorodeckiChaturvedi2026,
      title={Quantum work beyond classical (commuting) limits}, 
      author={Sumit Rout and Aravinth Balaji Ravichandran and Paweł Horodecki and Anubhav Chaturvedi},
      year={2026},
      eprint={2605.04021},
      archivePrefix={arXiv},
      primaryClass={quant-ph},
      url={https://arxiv.org/abs/2605.04021}, 
}

@article{Zhi2-25_quantumbatteries,
  title = {Topological Quantum Batteries},
  author = {Lu, Zhi-Guang and Tian, Guoqing and L\"u, Xin-You and Shang, Cheng},
  journal = {Phys. Rev. Lett.},
  volume = {134},
  issue = {18},
  pages = {180401},
  numpages = {8},
  year = {2025},
  month = {May},
  publisher = {American Physical Society},
  doi = {10.1103/PhysRevLett.134.180401},
  url = {https://link.aps.org/doi/10.1103/PhysRevLett.134.180401}
}

@article{Chaki2025_distillationfromquantumbatteries,
  title = {Auxiliary-assisted energy distillation from quantum batteries},
  author = {Chaki, Paranjoy and Bhattacharyya, Aparajita and Sen, Kornikar and Sen, Ujjwal},
  journal = {Phys. Rev. A},
  volume = {112},
  issue = {5},
  pages = {052446},
  numpages = {16},
  year = {2025},
  month = {Nov},
  publisher = {American Physical Society},
  doi = {10.1103/cyrc-ms34},
  url = {https://link.aps.org/doi/10.1103/cyrc-ms34}
}

@article{Chaki2026_assistantsquantumbatteries,
doi = {10.1088/1751-8121/ae3ff5},
url = {https://doi.org/10.1088/1751-8121/ae3ff5},
year = {2026},
month = {feb},
publisher = {IOP Publishing},
volume = {59},
number = {6},
pages = {065303},
author = {Chaki, Paranjoy and Bhattacharyya, Aparajita and Sen, Kornikar and Sen, Ujjwal},
title = {Role of energy-invariant assistants in energy extraction from quantum batteries},
journal = {Journal of Physics A: Mathematical and Theoretical}
}

@misc{chaki2025positivenonpositivemeasurementsenergy,
      title={Positive and non-positive measurements in energy extraction from quantum batteries}, 
      author={Paranjoy Chaki and Aparajita Bhattacharyya and Kornikar Sen and Ujjwal Sen},
      year={2025},
      eprint={2404.18745},
      archivePrefix={arXiv},
      primaryClass={quant-ph},
      url={https://arxiv.org/abs/2404.18745}, 
}

@article{LostaglioReview2019,
  author = {Lostaglio, Matteo},
  title = {An Introductory Review of the Resource Theory Approach to Thermodynamics},
  journal = {Reports on Progress in Physics},
  volume = {82},
  pages = {114001},
  year = {2019},
  doi = {10.1088/1361-6633/ab46e5}
}

@article{ChitambarGour2019,
  author = {Chitambar, Eric and Gour, Gilad},
  title = {Quantum Resource Theories},
  journal = {Reviews of Modern Physics},
  volume = {91},
  pages = {025001},
  year = {2019},
  doi = {10.1103/RevModPhys.91.025001}
}

@article{Gour2015,
  author = {Gour, Gilad and M{\"u}ller, Markus P. and Narasimhachar, Varun and Spekkens, Robert W. and Halpern, Nicole Yunger},
  title = {The Resource Theory of Informational Nonequilibrium in Thermodynamics},
  journal = {Physics Reports},
  volume = {583},
  pages = {1--58},
  year = {2015},
  doi = {10.1016/j.physrep.2015.04.003}
}

@article{Cwiklinski2015,
  author = {{\'C}wikli{\'n}ski, Piotr and Studzi{\'n}ski, Micha{\l} and Horodecki, Micha{\l} and Oppenheim, Jonathan},
  title = {Limitations on the Evolution of Quantum Coherences: Towards Fully Quantum Second Laws of Thermodynamics},
  journal = {Physical Review Letters},
  volume = {115},
  pages = {210403},
  year = {2015},
  doi = {10.1103/PhysRevLett.115.210403}
}

@article{Shiraishi2025,
  author  = {Shiraishi, Naoto},
  title   = {Quantum Thermodynamics with Coherence: Covariant Gibbs-Preserving Operation Is Characterized by the Free Energy},
  journal = {Physical Review Letters},
  volume  = {134},
  number  = {16},
  pages   = {160402},
  year    = {2025},
  doi     = {10.1103/PhysRevLett.134.160402},
  eprint  = {2406.06234},
  archivePrefix = {arXiv},
  primaryClass  = {quant-ph}
}

@article{SonNg2023,
doi = {10.1088/2058-9565/ad7ef5},
url = {https://doi.org/10.1088/2058-9565/ad7ef5},
year = {2024},
month = {oct},
publisher = {IOP Publishing},
volume = {10},
number = {1},
pages = {015011},
author = {Son, Jeongrak and Ng, Nelly H Y},
title = {A hierarchy of thermal processes collapses under catalysis},
journal = {Quantum Science and Technology}
}

@article{HuDing2019,
  author  = {Hu, Xueyuan and Ding, Feng},
  title   = {Thermal operations involving a single-mode bosonic bath},
  journal = {Physical Review A},
  volume  = {99},
  number  = {1},
  pages   = {012104},
  year    = {2019},
  doi     = {10.1103/PhysRevA.99.012104},
  eprint  = {1807.02942},
  archivePrefix = {arXiv},
  primaryClass  = {quant-ph}
}

@article{HackMendl2025,
  author  = {Hack, Pedro and Mendl, Christian B.},
  title   = {Universality and classification of elementary thermal operations},
  journal = {Journal of Physics A: Mathematical and Theoretical},
  volume  = {58},
  number  = {31},
  pages   = {315302},
  year    = {2025},
  doi     = {10.1088/1751-8121/adf26e},
  eprint  = {2312.11223},
  archivePrefix = {arXiv},
  primaryClass  = {quant-ph}
}

@article{KorzekwaWorkFromCoh2016,
  author  = {Korzekwa, Kamil and Lostaglio, Matteo and Oppenheim, Jonathan and Jennings, David},
  title   = {The extraction of work from quantum coherence},
  journal = {New Journal of Physics},
  year    = {2016},
  volume  = {18},
  pages   = {023045},
  doi     = {10.1088/1367-2630/18/2/023045}
}

@misc{Su2025_thermoresource, author = {Su, Shanhe and Fu, Cong and Pan, Ousi and Xia, Shihao and Liu, Fei and Chen, Jincan}, title = {Quantum Coherence as a Thermodynamic Resource Beyond the Classical Uncertainty Bound}, year = {2025}, eprint = {2510.20873}, archivePrefix = {arXiv}, primaryClass = {quant-ph}, doi = {10.48550/arXiv.2510.20873}, url = {https://arxiv.org/abs/2510.20873} }

@article{CaravelliQuantum2021,
  author  = {Caravelli, Francesco and Yan, Bin and Garc{\'i}a-Pintos, Luis Pedro and Hamma, Alioscia},
  title   = {Energy storage and coherence in closed and open quantum batteries},
  journal = {Quantum},
  year    = {2021},
  volume  = {5},
  pages   = {505},
  doi     = {10.22331/q-2021-07-15-505},
  url     = {https://quantum-journal.org/papers/q-2021-07-15-505/}
}

@article{Mazurek_2018,
  title = {Decomposability and convex structure of thermal processes},
  author = {Mazurek, Paweł and Horodecki, Michał},
  year = {2018},
  journal = {New Journal of Physics},
  volume = {20},
  number = {5},
  pages = {053040},
  doi = {10.1088/1367-2630/aac057}
}

@article{Faist2015,
  title = {Gibbs-preserving maps outperform thermal operations in the quantum regime},
  author = {Faist, Philippe and Oppenheim, Jonathan and Renner, Renato},
  journal = {New J. Phys.},
  volume = {17},
  pages = {043003},
  year = {2015},
  doi = {10.1088/1367-2630/17/4/043003}
}

@article{Lostaglio2019RoPP,
  doi = {10.1088/1361-6633/ab46e5},
  year = {2019},
  journal = {Reports on Progress in Physics},
  volume = {82},
  number = {11},
  pages = {114001},
  author = {Lostaglio, Matteo},
  title = {An introductory review of the resource theory approach to thermodynamics}
}

@article{Renes2014,
  title = {Work cost of thermal operations in quantum thermodynamics},
  author = {Renes, Joseph M.},
  journal = {Eur. Phys. J. Plus},
  volume = {129},
  pages = {153},
  year = {2014},
  doi = {10.1140/epjp/i2014-14153-8}
}

@article{Rodriguez_2024_Optimal_Control,
  title = {Optimal quantum control of charging quantum batteries},
  author = {Rodriguez, R R and Ahmadi, B and Suarez, G and Mazurek, P and Barzanjeh, S and Horodecki, P},
  journal = {New Journal of Physics},
  volume = {26},
  number = {4},
  pages = {043004},
  year = {2024},
  publisher = {IOP Publishing},
  doi = {10.1088/1367-2630/ad3843},
  url = {https://doi.org}
}

@article{LipkaBartosik2021secondlawof,
  doi = {10.22331/q-2021-03-10-408},
  url = {https://doi.org/10.22331/q-2021-03-10-408},
  title = {Second law of thermodynamics for batteries with vacuum state},
  author = {Lipka-Bartosik, Patryk and Mazurek, Pawe{\l{}} and Horodecki, Micha{\l{}}},
  journal = {{Quantum}},
  issn = {2521-327X},
  publisher = {{Verein zur F{\"{o}}rderung des Open Access Publizierens in den Quantenwissenschaften}},
  volume = {5},
  pages = {408},
  month = mar,
  year = {2021}
}

@article{Borhan2025_batteries,
  title = {Superoptimal charging of quantum batteries via reservoir engineering: Arbitrary energy transfer unlocked},
  author = {Ahmadi, Borhan and Mazurek, Pawe\l{} and Barzanjeh, Shabir and Horodecki, Pawe\l{}},
  journal = {Phys. Rev. Appl.},
  volume = {23},
  issue = {2},
  pages = {024010},
  numpages = {14},
  year = {2025},
  month = {Feb},
  publisher = {American Physical Society},
  doi = {10.1103/PhysRevApplied.23.024010},
  url = {https://link.aps.org/doi/10.1103/PhysRevApplied.23.024010}
}

@article{Malavazi2025ChargepreservingOI,
  title={Charge-preserving operations in quantum batteries},
  author={Andr{\'e} H. A. Malavazi and Borhan Ahmadi and Pawe{\l} Horodecki and Pedro R. Dieguez},
  journal={PRX Energy},
  year={2025},
  volume={4},
  pages={043003},
  doi={10.1103/PRXEnergy.4.043003},
  url={https://aps.org}
}

@article{GoriniKossakowskiSudarshan1976,
  author  = {Gorini, Vittorio and Kossakowski, Andrzej and Sudarshan, E. C. G.},
  title   = {Completely positive dynamical semigroups of {N}-level systems},
  journal = {Journal of Mathematical Physics},
  volume  = {17},
  pages   = {821-825},
  year    = {1976},
  doi     = {10.1063/1.522979}
}

@article{Lindblad1976,
  author  = {Lindblad, Goran},
  title   = {On the generators of quantum dynamical semigroups},
  journal = {Communications in Mathematical Physics},
  volume  = {48},
  pages   = {119-130},
  year    = {1976},
  doi     = {10.1007/BF01608499}
}

@article{PuszWoronowicz1978,
  author  = {Pusz, Wies{\l}aw and Woronowicz, Stanis{\l}aw L.},
  title   = {Passive states and {KMS} states for general quantum systems},
  journal = {Communications in Mathematical Physics},
  volume  = {58},
  pages   = {273-290},
  year    = {1978},
  doi     = {10.1007/BF01614224}
}

@article{AllahverdyanBalianNieuwenhuizen2004,
  author  = {A. E. Allahverdyan and R. Balian and Th. M. Nieuwenhuizen},
  title   = {Maximal work extraction from finite quantum systems},
  journal = {Europhysics Letters},
  volume  = {67},
  number  = {4},
  pages   = {565},
  year    = {2004},
  doi     = {10.1209/epl/i2004-10101-2}
}

@article{deOliveiraJuniorBraskLipkaBartosik2025,
 title = {Heat as a Witness of Quantum Properties},
  author = {de Oliveira Junior, A. and Brask, Jonatan Bohr and Lipka-Bartosik, Patryk},
  journal = {Phys. Rev. Lett.},
  volume = {134},
  issue = {5},
  pages = {050401},
  numpages = {7},
  year = {2025},
  month = {Feb},
  publisher = {American Physical Society},
  doi = {10.1103/PhysRevLett.134.050401},
  url = {https://link.aps.org/doi/10.1103/PhysRevLett.134.050401}
}
\end{document}